\newcommand{\bydef}{\stackrel{\mathrm{def}}{=}}
\newcommand{\bra}[1]{\mbox{$\langle #1|$}}
\newcommand{\ket}[1]{\mbox{$|#1\rangle$}}
\newcommand{\braket}[2]{\mbox{$\langle #1|#2\rangle$}}
\newcommand{\ketbra}[2]{\mbox{$|#1\rangle\langle #2|$}}
\def\W{{\sf W}}
\newcommand{\gate}[1]{\ensuremath{\text{\sf #1}}}
\newcommand{\CNOT}{\gate{CNOT}}
\newcommand{\NOT}{\gate{NOT}}
\newcommand{\XOR}{{\sf XOR}}
\newcommand{\OR}{{\sf OR}}
\newcommand{\COPY}{{\sf COPY}}
\newcommand{\AND}{{\sf AND}}
\newcommand{\NAND}{{\sf NAND}}
\newcommand{\XNOR}{{\sf XNOR}}
\newcommand{\NOR}{{\sf NOR}}
\newcommand{\ANF}{{\sf ANF}}
\newcommand{\PPRM}{{\sf PPRM}}
\def\NOT{{\sf NOT}}
\def\H{{\sf H}}
\newcommand{\GHZ}{{\sf GHZ }}
\newcommand{\eye}{\mathbf{1}} 
\newcommand{\be}{\begin{equation}}
\newcommand{\ee}{\end{equation}}
\newcommand{\hilbert}[1]{\ensuremath{\mathcal{#1}}}
\theoremstyle{plain}
\newtheorem{theorem}{Theorem}
\newtheorem{lemma}[theorem]{Lemma}
\newtheorem{corollary}[theorem]{Corollary}
\newtheorem{remark}[theorem]{Remark}
\newtheorem{observation}[theorem]{Observation}
\theoremstyle{definition}
\newtheorem{definition}[theorem]{Definition}
\newtheorem{example}[theorem]{Example}
\def\1#1{{\bf #1}}
\def\2#1{{\cal #1}}
\def\3#1{{\sl #1}}
\def\4#1{{\tt #1}}
\def\5#1{{\sf #1}}
\def\6#1{{\mathfrak #1}}
\def\7#1{{\mathbb #1}}
\begin{document}

\title{Categorical Tensor Network States}
\author{Jacob D. Biamonte,$^{1,2,3,}$\footnote{\tt jacob.biamonte@qubit.org} Stephen R.
Clark$^{2,4}$ and Dieter Jaksch$^{5,4,2}$}
\affiliation{$^1$Oxford University Computing Laboratory, Parks Road
Oxford, OX1 3QD, United Kingdom\\
$^2$Centre for Quantum Technologies, National University of Singapore, 3 Science
Drive 2, Singapore 117543, Singapore \\
$^3$ISI Foundation, Torino Italy \\ 
$^4$Keble College, Parks Road, University of Oxford, Oxford OX1 3PG, United Kingdom\\
$^5$Clarendon Laboratory, Department of Physics, University of Oxford, Oxford OX1
3PU, United Kingdom}

\pacs{03.65.Fd, 03.65.Ca, 03.65.Aa}

\begin{abstract}
We examine the use of string diagrams and the mathematics of category theory in the
description of quantum states by tensor networks.  This approach lead to a unification of several 
ideas, as well as several results and methods that have not previously appeared in either side of the literature.  
Our approach enabled the development of a tensor network framework 
allowing a solution to the 
quantum decomposition problem which has several appealing features. Specifically, given an $n$-body quantum state 
$\ket{\psi}$, we present a new and general method to factor $\ket{\psi}$ into a tensor
network of clearly defined building blocks.  We use the solution to expose a previously unknown and large class of quantum states which we prove can be sampled
efficiently and exactly. This general framework of categorical tensor network
states, where a combination of generic and algebraically defined tensors appear,
enhances the theory of tensor network states.
\end{abstract}
\maketitle

\section{Introduction}

Tensor network states have recently emerged from Quantum Information
Science as a general method to simulate quantum systems using classical computers.  
By utilizing quantum
information concepts such as entanglement and condensed matter concepts like
renormalization,
several novel algorithms, based on tensor network states (TNS), have been developed
which have overcome many pre-existing limitations. These and other related methods
have been used to perform highly accurate calculations on a broad class of
strongly-correlated systems and have attracted significant interest from several
research communities concerned with computer simulations of physical systems.

In this work we develop a tool set and corresponding framework
to enhance the range of methods currently used to address
problems in many-body physics.  In this categorical network
model of quantum states, each of the internal components that form the
building blocks of the network  are defined in terms of their mathematical
properties, and these properties are given in terms of equations which have a graphical interpretation. In this way, diagrammatic methods from modern algebra and category theory~\cite{MacLane} can be combined with graphical methods currently used in tensor network descriptions of many-body physics.  Moreover our results indicate that it may be 
be advantageous to use ``categorical components'' within tensor networks, whose
algebraic properties can permit a broader means of rewiring networks, and
potentially reveal new types of contractible tensor networks.  Our results
include defining a new graphical calculus on tensors, and exposing their key
properties in a ``tensor tool box''. We use these tensors to present one solution to
the problem of factoring any given quantum state into a tensor network. The graphical properties of this solution
then enabled us to expose a wide class of tensor network states that can be sampled
in the computational basis efficiently and exactly.  Here we list several novel contributions of the present paper.  

\begin{enumerate}
 \item[(i)] We introduce a universal class of tensors to the quantum theory and tensor networks literature.  We cast several properties and simplification rules applicable for classical networks of this type, into the language of tensor networks.  (This fixed collection of tensors and the corresponding Boolean algebra framework we introduce provides the potential for a new tool to construct networks of relevance for problems in quantum information science and condensed matter physics. Recent work has used the collection of tensors we introduce here (arXiv version) together with their rewrite rules to solve models in lattice gauge theory \cite{2011arXiv1108.0888D} and to study correlator product states \cite{2011arXiv1107.0936A}).   
  \item[(ii)] We define a new class of quantum states, quantum Boolean states, expressed as: $\ket{\psi} = \sum f(x) \ket{x}$ where $f$ is a switching function and the sum is over all n-long bit strings.   A quantum state is called Boolean iff it can be written in a local basis with amplitude coefficients taking only binary values $0$ or $1$.  Examples of states in this class include states such as \W{} and \GHZ{}-states.  We develop tools which aid in the study of this class of states. 
 \item[(iii)] We prove that a tensor network representing a Boolean quantum state is determined from the classical network description of the corresponding function.  The quantum tensor networks are found by the following method:  we let each classical gate act on a linear space and replace the composition of functions, with the contraction of tensors.  This technique is detailed in the present work and has very recently been used in \cite{2011arXiv1108.0888D}.     
 \item[(iv)] We present a proof which shows a new universal constructive decomposition of any quantum state, into a network built in terms of tensors in the tool box we introduce.  
 \item[(v)] We present a new and large class of quantum states which we prove to be efficient to sample.  A subclass of this class of states includes the widely studied class of correlator product states.  This connection (arXiv version) was recently studied in \cite{2011arXiv1107.0936A}.  Exploring this class in further detail is left to future studies.  
  \item[(vi)] Although Boolean algebra has long been used in both classical and quantum computer science, we seem to be the first to 
 tailor its use to the problem of describing quantum states by tensor networks.  This has resulted in new proof techniques and also enables one to use methods from the well developed graphical language of classical circuits inside the domain of tensor networks.   
\end{enumerate}
   
To explain the main motivation which prompted us to study tensor network states, let us recall
the success of established numerical simulation methods, such as the density matrix
renormalization group (DMRG)~\cite{DMRG,RevModPhys.77.259, 2009PhRvL.102e7202H} which is based
on an elegant class of tensor networks called Matrix Product States (MPS)~\cite{PhysRevLett.75.3537}. For more
than 15 years DMRG has been a key method for studying the stationary properties of
strongly-correlated 1D quantum systems 
in regimes far beyond those which can be described with perturbative or mean-field
techniques.  Exploiting the tensor network structure of MPS has lead to explicit algorithms, such as
the Time-Evolving Block Decimation (TEBD)
method~\cite{vidal-2003-91, vidal-2004-93,
2009PhRvA..80d2333B,2010PhRvB..82k5126P,2010PhRvB..82l5118K,course}, for computing
the real-time dynamics of 1D quantum systems. Accurate calculations of
out-of-equilibrium properties has proven extremely useful for describing various
condensed matter systems~\cite{2005PhRvA..72d3618D, 2010arXiv1010.2351S,
2010NJPh...12b5005C}, as well as transport phenomena in ultra-cold atoms in optical
lattices~\cite{PhysRevA.76.011605,PhysRevA.82.043617,1367-2630-12-2-025014}.
Additionally the TEBD method has recently been successfully adapted to the simulation
of stochastic classical systems~\cite{tommy}, as well as for simulating operators in
the
Heisenberg picture~\cite{2009PhRvL.102e7202H}. Despite these successes,
limitations remain in the size, dimensionality, and classes of Hamiltonians that can
be simulated with MPS based methods. To overcome these restrictions, several new 
algorithms have been proposed which are based on different types of tensor network states.
Specifically: Projected Entangled Pair States
(PEPS)~\cite{MPSreview08,2010PhRvA..81e2338K,2010PhRvB..81p5104C} which directly generalize the MPS
structure to higher dimensions, and the Multi-scale Entanglement Renormalization Ansatz
(MERA)~\cite{MERA,2009arXiv0912.1651V,2009PhRvB..80p5129C,2010PhRvA..81a0303C}) which
instead utilizes an intuitive hierarchical structure.

Category theory is often used as a unifying language for
mathematics~\cite{MacLane} and in more recent times has been
used to formulate physical
theories~\cite{prehistory}.  One of the
strong points of categorical modeling is that it comes equipped with many types of
intuitive graphical calculi. (We mention the coherence
results~\cite{coherence, MacLane, Selinger09} and as a matter of
convenience, make use of $\dagger$-compactness. The
graphical calculus of categories
formally extends to a rigorous
tool.  See for instance, Selinger's survey of graphical languages for monoidal
categories outlining the categories describing quantum
theory~\cite{Selinger09}.)  The graphical theory behind the types of diagrams we consider here dates to the 
work of Lafont \cite{Lafont92, Lafont95} who built directly on the earlier work by Penrose \cite{Penrose}. 

A motivating reason for connecting category theory to tensor networks is that, increasingly,
both existing and newly developed tensor network algorithms are most easily expressed
in terms of informal graphical
depictions. This graphical approach can now be complemented and enhanced by
exploiting the long existing 
language of category theory~\cite{MacLane,2009arXiv0903.0340B,prehistory}. This
immediately enables the application
of many established techniques allowing for both a ``zoomed out'' description exposing
known high-level structures, but also enables new ``zoomed in'' descriptions, exposing
``hidden'' algebraic structures that have not previously been considered. 

We will illustrate our categorical approach to tensor networks by focusing on tensor
networks constructed from familiar components, namely Boolean logic gates (and
multi-valued logic gates in the case of qudits), applied to this
unfamiliar context. To accomplish this goal,
we build on ideas across several fields. This includes extending the work
by Lafont~\cite{Boolean03} which was aimed at providing an algebraic
theory for classical circuits. (Lafont's work is related to the more recent
work on proof theory by Guiraud~\cite{3Dproofs}, and is a different direction
from other work on applying category theory to classical networks appearing
in~\cite{CatCircuits}.) The use of symmetric monoidal categories tightens this
approach and removes some redundancy in Lafont's graphical
lemmas. The application of these results to tensor networks
introduces several
novel features. The first feature is that once conventional logic circuits are
formulated as tensor networks they can be distorted into atemporal configurations
since the indices (or legs) of tensors can be bent around arbitrarily. This permits
a very compact tensor network representation of a large interesting class of Boolean
states such as {\sf GHZ}-states, {\sf W}-states and 
symmetric states~\cite{2010NJPh...12g3025A}, using exclusively Boolean gate
tensors. 
A second feature is that once expressed as tensors the corresponding classical logic
circuits act on
complex valued inputs and outputs, as opposed to just binary values. By permitting
arbitrary single-qubit states (general rank-1 tensors) at the output of tensor
networks, which are otherwise composed of only switching functions, we arrive at a
broader class of generalized Boolean states. We prove in Theorem \ref{theorem:rep}
that this class of states provides an explicit construction method for factoring any
given quantum state into a 
tensor network. As expected, the cost of this exhaustiveness is that the resulting
network is, in general, neither efficient in description or in contraction.  However, by
limiting both the gate count and number of the switching functions
comprising the tensor networks to be polynomial in the system size, we obtain a
class of states, which we call Generalized Polynomial Boolean States
(GPBS see Definition \ref{def:GPBS}), that can be sampled
in the computational basis efficiently and exactly (Theorem
\ref{corollary:states}).

\paragraph{Manuscript Structure.}
Next in Section \ref{sec:overview} we quickly review the key concepts
introduced in this paper before going into detail in the remaining sections.  We
continue in Section \ref{sec:components} by defining the network building blocks:
this
includes defining some new rank-3 tensors such as the quantum \AND-state in
Equation~\eqref{eqn:quantumAND}. We then consider how these components interact in
Section~\ref{sec:interaction}.  This is done in terms of algebraic structures, such
as
Bialgebras (Section \ref{sec:bialgebra}) and Hopf-algebras (Section \ref{sec:hopf})
which are well known to have a purely diagrammatic interpretation.  With these definitions in place,
in Section~\ref{sec:CTN} we apply this framework to tensor network theory. 
As an illustrative example, we consider the \W- and {\sf GHZ}-states using our formalism.  Specifically, we consider a
particular categorical tensor
network for many-body \W-states in Section \ref{sec:CTN}. A proof of our
decomposition
theorem for quantum states is given in Section~\ref{sec:mainresult}.  In conclusion,
we mention some
future directions for work in Section \ref{sec:conclusion}. We have included
Appendix~\ref{sec:newalgebra} on algebras defined on quantum states and
Appendix~\ref{sec:Boolean} on the Boolean \XOR-algebra.

\paragraph{Background Reading.} 
The results appearing in this work were found by tailoring several powerful
techniques from modern mathematics: category theory, algebra and co-algebra and
applicable results from classical network theory and graphical calculus. Tensor
network states are covered in the reviews~\cite{MPSreview08,TNSreview09,
2009arXiv0912.1651V,2010arXiv1001.0767S}. For general background on category theory
see~\cite{MacLane}.  For background on Boolean algebra, discrete set
functions and circuit theory see~\cite{Weg87} and see \cite{BH02} for background on
pseudo Boolean functions and for multi-valued logic see~\cite{MVL1}.  For background
on quantum circuits and quantum computing concepts see~\cite{KSV02, NC00} and for
background on the theory of entanglement see~\cite{2008RvMP...80..517A}.  For the
current capabilities of the existing graphical language of tensor network states see
e.g.\ \cite{Atemp06,GESP07} and for work on using ideas related to tensor networks
for state preparation of physical systems
see~\cite{2008PhRvL.101r0506L,2007PhRvA..75c2311S,GESP07}.

\section{Results Overview}\label{sec:overview}

In the present Section, we informally review our main results. 
The idea of translating any given quantum state or 
operator into a representation in terms of a connected network of algebraically defined components is explained next in
Section \ref{sec:newrep} with the corresponding algebraic definitions of these
network components over viewed in Section \ref{sec:comdef}.  Boolean quantum
logic tensors are then introduced in Section \ref{sec:c}.  We summarize our main
results in Section \ref{sec:connectingdots}. 

\subsection{Tensor network representations of quantum states}\label{sec:newrep}
A qudit is a $d$-level generalization of a qubit.  In
physics a quantum state of $n$-qudits has an exact representation as a rank-$n$
tensor with each of the open legs corresponding to a physical degree of freedom, such
as a spin with $(d-1)/2$ energy levels. Such a representation, shown in
Figure~\ref{fig:tensor-networks}(a) is manifestly inefficient since it will have a
number of complex components which grows exponentially with $n$. The purpose of
tensor network states is to decompose this type of structureless rank-$n$ tensor into
a network of tensors whose rank is bounded. There are now a number of ways to
describe strongly-correlated quantum lattice systems as 
tensor-networks. As mentioned in the introduction, these include
MPS~\cite{ostromm1995,fannesnachtwern1992,2010NJPh...12b5005C},
PEPS~\cite{verstraete08,TNSreview09} and MERA~\cite{vidal2007, 2009arXiv0912.1651V}.
For MPS and
PEPS, shown in
Figures~\ref{fig:tensor-networks}(b) and (c), the resulting network of tensors
follows
the geometry of the underlying physical system, e.g., a 1D chain and 2D grid,
respectively. Alternatively a Tensor Tree Network
(TTN)~\cite{PhysRevA.74.022320,2009PhRvB..80w5127T} can be
employed which has a hierarchical structure where only the bottom layer has open physical legs, as shown in Figure~\ref{fig:tensor-networks}(d) 
for a 1D system and Figure~\ref{fig:tensor-networks}(e) for a 2D one. (Each
tensor in these networks is otherwise
unconstrained, although enforcing some constraints, such as orthogonality, has
numerical advantages.) For MERA the network is similar
to a TTN, as seen in Figure~\ref{fig:tensor-networks}(f) for 1D, but is instead comprised of alternating layers of rank-4 unitary and rank-3 isometric
tensors. The central problem faced by all types of tensor networks is that the
resulting tensor network for the quantity $\bra{\psi}(\2 O\ket{\psi})$, where $\2O$
is some product operator, needs to be efficiently contractible if any
physical results, e.g., expectation values, correlations or probabilities, are to be
computed. For MPS and TTN
efficient exact contractibility follows from the 1D chain or tree-like geometry, while for MERA it follows from its peculiar causal cone
structure resulting from the constraints imposed on the tensors~\cite{2009arXiv0912.1651V}. For
PEPS, however, exact contraction is not efficient in general, but  can be rendered
efficient if
approximations are made~\cite{verstraete08,TNSreview09}.

\begin{figure}[t]
\includegraphics[width=0.9\textwidth]{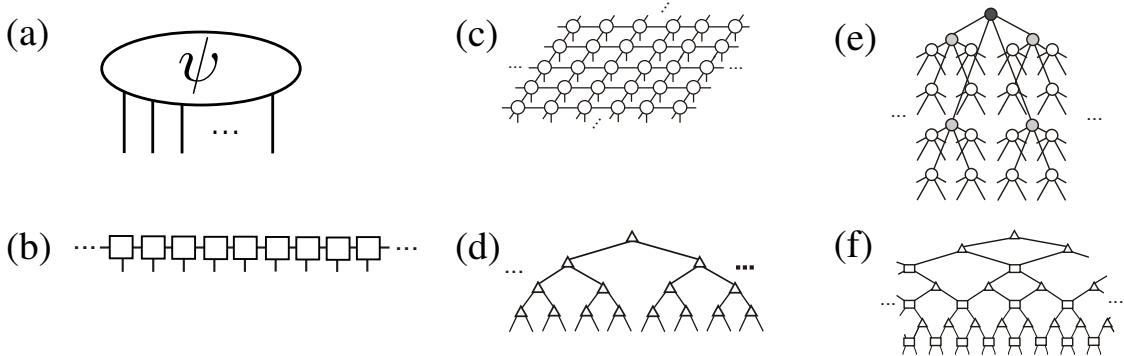}
\caption{(a) A generic quantum state $\ket{\psi}$ for $n$ degrees of freedom
represented as a tensor with $n$ open legs. (b) A comb-like MPS tensor network for a
1D chain system~\cite{ostromm1995,fannesnachtwern1992}. (b) A grid-like PEPS tensor
network for a 2D lattice system~\cite{verstraete08,TNSreview09}. (d) A TTN for a 1D
chain system where only the bottom layer of tensors possess open physical
legs~\cite{PhysRevA.74.022320,2009PhRvB..80w5127T}. (e) A TTN for a 2D lattice
system. (f) A hierarchically structured MERA network for a 1D chain system possessing
unitaries (rank-4 tensors) and isometries (rank-3 tensors)~\cite{vidal2007,
2009arXiv0912.1651V}. This tensor network can also be generalized to a 2D lattice
(not shown).}\label{fig:tensor-networks}
\end{figure}

In our approach we define a categorical tensor network state (CTNS) generally as any
TNS which contains some algebraically constrained tensors along with possible generic
ones. Indeed, when recast, certain widely used classes of TNS can be readily exposed
as examples of CTNS.
Specifically, variants of PEPS have been proposed called
string-bond states~\cite{PhysRevLett.100.040501}.  Although these string-bond states,
like PEPS in general, are not
efficiently contractible, they are efficient to sample. By this we mean that for
these
special cases of PEPS any given amplitude of the resulting state (for a fixed computational basis state) can be extracted exactly and efficiently,
in contrast to generic PEPS. This permits variational quantum Monte-Carlo calculations to be performed on string-bond states 
where the energy of the state is stochastically
minimized~\cite{PhysRevLett.100.040501}. This remarkable property
follows directly from the use of a tensor, 
called the {\sf COPY}-dot, which will form one of several tensors in the fixed
toolbox considered in great detail later. As its name suggests, the {\sf COPY}-dot
duplicates inputs states in the computational basis, and thus with these inputs
breaks up into disconnected components, as depicted in
Figure~\ref{fig:stringbonds}(a). By using the
{\sf COPY}-dot as the ``glue" for connecting up a TNS, the ability to sample the
state
efficiently is guaranteed so long as the individual parts connected are themselves
contractible. The generality and applicability of this trick can be seen by examining
the structure of string-bond states, as well as other types of 
similar states like entangled-plaquette-states~\cite{2009NJPh...11h3026M} and
correlator-product states~\cite{2009PhRvB..80x5116C}, shown in Figure~\ref{fig:stringbonds}(c)-(e).  A long-term aim of this work is that by presenting our toolbox of
tensors, entirely new classes of CTNS with similarly desirable contractibility
properties can be devised. Indeed we have a useful result in this direction
(Theorem \ref{corollary:states}) by
introducing a new class of states, which can also be sampled exactly and efficiently.

\begin{figure}[t]
\includegraphics[width=0.9\textwidth]{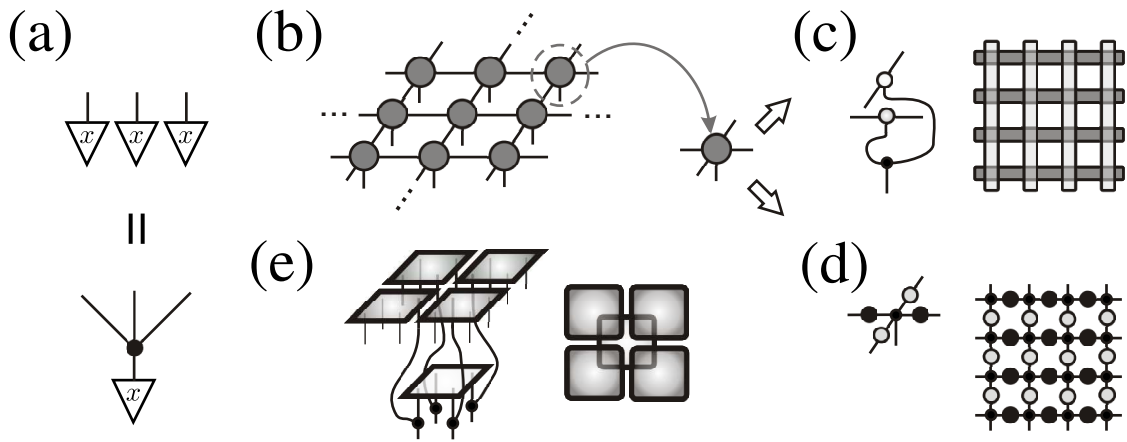}
\caption{(a) One of the simplest tensors, called the diagonal in category theory,
the {\sf COPY}-gate or the {\sf COPY}-dot in circuits, copies computational basis
states
$\ket{x}$ where $x=0,1$ for qubits and $x=0,1,...,d-1$
for qudits.  The tensor subsequently breaks up into disconnected states.
(b) A generic PEPS in which we expose a single generic rank-5 tensor. This
tensor network can neither be contracted nor sampled exactly and efficiently.
However,
if the tensor has internal structure exploiting the {\sf COPY}-dot then efficient
sampling
becomes possible. (c) The tensor breaks up into a vertical and a horizontal rank-3
tensor joined by the {\sf COPY}-dot. Upon sampling computational basis states the
resulting contraction reduces to many isolated MPS, each of which are exactly
contractible, for
each row and column of the lattice. This type of state is known as a string-bond
state and can be readily generalized~\cite{PhysRevLett.100.040501}. (d) An even
simpler case is to break the tensor
up into four rank-2 tensors joined by a {\sf COPY}-dot forming a co-called
correlator-product state~\cite{2009PhRvB..80x5116C}. (e) Finally, outside the PEPS
class, there are entangled
plaquette states~\cite{2009NJPh...11h3026M} which join up overlapping tensors (in
this case rank-4 ones
describing a $2\times 2$ plaquette) for each plaquette. Efficient sampling is again
possible due to the {\sf COPY}-dot.}\label{fig:stringbonds}
\end{figure}

On an interesting historical note, to the best of our knowledge, a graphical
interpretation of tensors was first pointed out in~\cite{Penrose}.  A graphical language for describing the manipulations and
steps of tensor network based algorithms has become widely
used. By introducing new tensors, and by considering their graphical properties, an aim of our work is to extend the existing methods of the diagrammatic methods used. 
 The graphical properties of tensors are
defined succinctly via so-called \textit{string diagrams}. As an exemplary illustration, we
will
consider
in great detail CTNS which are composed entirely from a tensor toolbox built 
from classical Boolean logic gates. By invoking known theorems asserting the
universality of multi-valued logic~\cite{MVL1} (also called $d$-state switching), our
methods can be
readily applied to tensors of any finite dimension.  Our approach provides not only
an
example of the use of
well known gates in an unfamiliar
context but also illustrates the potential power of having ``algebraic components''
within a tensor network. The next two
sections highlight some of the properties of this toolbox of tensors and
reviews our main result showing a new quantum state decomposition using a subset of
them. The full details of this work then follow from Section~\ref{sec:components}
onwards.

\subsection{Network components fully defined by diagrammatic laws}\label{sec:comdef}

We will now review the set of tensors that form our universal
building blocks.  To get an idea of how
the tensor calculus will work, consider Figure~\ref{fig:F2-presentation},
which forms a presentation of the linear fragment of the Boolean
calculus~\cite{Boolean03}):
that is, the calculus of Boolean algebra we represent on quantum states, restricted
to the building blocks that can be used to generate linear Boolean functions.  

To recover the full Boolean-calculus, we must consider a non-linear Boolean gate: we
use the \AND{}-gate.  Figure~\ref{fig:F2-presentation} together with
Figure~\ref{fig:extraF2} form a full presentation of the
calculus~\cite{Boolean03}. The origin and consequences of these relations will be
considered in full detail in Section~\ref{sec:components}. The presentations in Figure~\ref{fig:F2-presentation} together with
Figure~\ref{fig:extraF2} represent a
complete set of defining equations~\cite{Blog}. The results we report and the
introduction of this new picture calculus into physics has already attracted
significant interest and provided a new research direction in categorical quantum
mechanics~\cite{catQM,Selinger2007139}. 

\begin{figure}[h]
\includegraphics[width=0.9\textwidth]{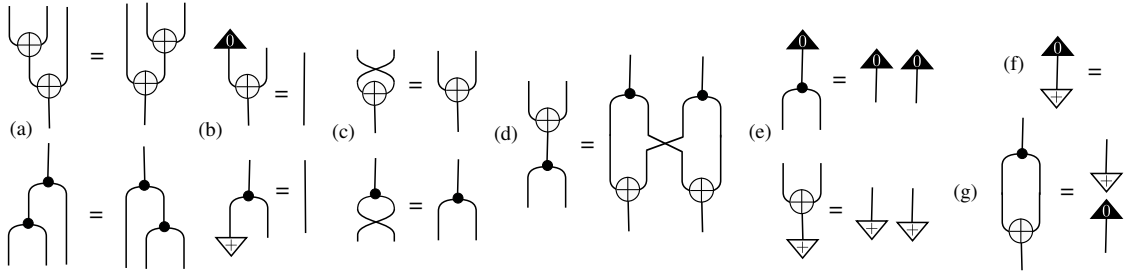}
\caption{Read top to bottom. A presentation of the linear fragment of the Boolean
calculus. The plus
($\oplus$) dots are \XOR{} and the black ($\bullet$) dots represent
{\sf COPY}.  The details of (a)-(g) will be given in
Sections~\ref{sec:components} and \ref{sec:interaction}.  For
instance, (d) represents the bialgebra law and (g) the Hopf-law (in
the case of qubits $x\oplus x =0$, in higher dimensions
the units $\bra{+}$ becomes
$\bra{0}+\bra{1}+\cdots+\bra{d-1}$).}\label{fig:F2-presentation}
\end{figure}

\begin{figure}[h]
\includegraphics[width=0.9\textwidth]{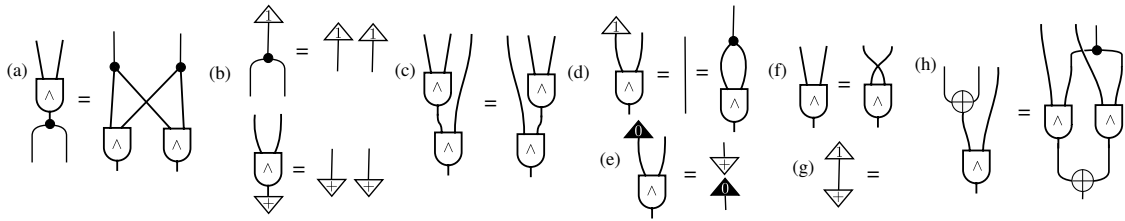}
\caption{Diagrams read top to bottom.  A presentation of the Boolean-calculus with
Figure~\ref{fig:F2-presentation}. The details of (a)-(g) will be given in
Sections~\ref{sec:components} and \ref{sec:interaction}.  For
instance, (h) represents distributivity of \AND{}($\wedge$) over \XOR{} ($\oplus$),
and (d) shows that $x\wedge x = x$.}\label{fig:extraF2}
\end{figure}

Proceeding axiomatically we need to add a bit more to the presentation of the
Boolean calculus to represent operators and quantum states. This is because e.g.\ all
the diagrams in Figure~\ref{fig:F2-presentation} and \ref{fig:extraF2} are
read from the top of the page to the bottom. Our network model of quantum states
requires that we are
able to turn maps upside down, e.g.\ transposition.  This additional flexibility
comes from an added ability to bend wires.  We can hence define transposition
graphically (see Figure~\ref{fig:adjoints} (d)). 

The way forward is to add what category theory refers to as \textit{compact
structures} (see
Section~\ref{sec:bends} for
details).  These compact structures are given diagrammatically as
\begin{center}
\includegraphics[width=0.3\textwidth]{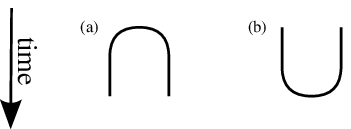}
\end{center}
and as will be explored in Section~\ref{sec:bends} these two structures allow us to
formally bend wires and to define the transpose of a linear
map/state, and provide a formal way to reshape a matrix.  We understand (a) above as
a cup, given as the generalized
Bell-state $\sum_{i=0}^{d-1}\ket{ii}$ and (b) above as the so-called cap,
Bell-costate
$\sum_{i=0}^{d-1}\bra{ii}$ or \textit{effect}. (Normalization factors
omitted: 
without loss of generality, we will often omit global scale factors (tensor
networks with no open legs).  This is done for ease of
presentation.  We note that for Hilbert space $\2 H$ there is a natural
isomorphism $\7 C\otimes \2 H \cong \2 H \cong \2 H\otimes \7 C$.) 

Compact structures provide a formal way to bend wires --- indeed, we can now
connect a diagram represented with an operator with spectral decomposition $\sum_i
\beta_i\ket{i}\bra{i}$ bend all the open wires (or legs) towards the same direction
and it then can be thought of as representing a state ($\sum_i
\beta_i\ket{i}\ket{\overline{i}}$ where overbar is complex conjugation), bend them
the other way and it then can be thought of as
representing a measurement outcome ($\sum_i
\beta_i\bra{\overline{i}}\bra{i}$), that is an effect. (The
isomorphism $\sum_i \beta_i\bra{\overline{i}}\bra{i}\cong \sum_i
\beta_i\ket{i}\bra{i} \cong \sum_i
\beta_i\ket{i}\ket{\overline{i}}$ for a real valued basis becomes $\sum_i
\beta_i\bra{i}\bra{i}\cong \sum_i
\beta_i\ket{i}\bra{i} \cong \sum_i
\beta_i\ket{i}\ket{i}$ which amounts to flipping a bra to a ket and vise versa.)  One
can also connect inputs to outputs, contracting indices and creating larger and
larger networks. With these
ingredients in place let us now consider the new class of Boolean quantum states.

\subsection{Boolean and multi-valued tensor network states}\label{sec:c}
To illustrate the idea of defining Boolean and multi-valued logic gates as tensors,
consider Figure \ref{fig:andtensor} which depicts a simple but
key network building block: the use of the so-called ``quantum logic
\AND-tensor'' which we define in Section~\ref{sec:AND}.  This is a representation of
the familiar Boolean operation in the bit pattern of a tri-qubit quantum state as
\begin{equation*}
\ket{\psi_\AND} \bydef
\sum_{x_1,x_2\in\{0,1\}}\ket{x_1}\otimes\ket{x_2}\otimes\ket{x_1\wedge
x_2}=\ket{000}+\ket{010} +\ket{100} +\ket{111}
\end{equation*}
and hence the truth table of a function is encoded in the bit pattern of the
superposition state. This utilizes a linear representation
of Boolean gates on quantum states as opposed to the typical direct sum
representation common in Boolean algebra.

\begin{figure}[t]
\includegraphics[width=0.9\textwidth]{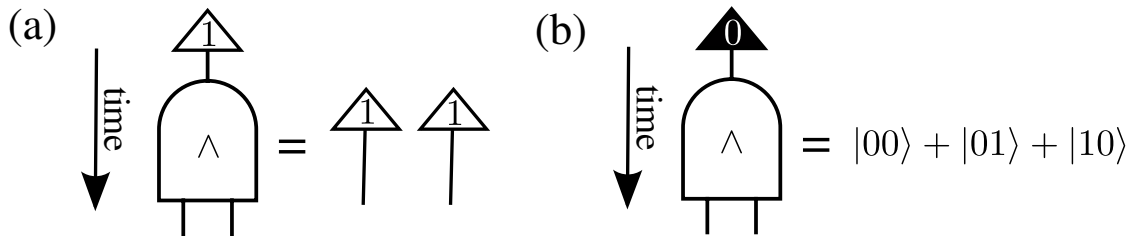}
\caption{Example of the Boolean quantum \AND-state or tensor.  In (a) the
network is run backwards (post-selected) to $\bra{1}$ resulting in
the product state $\ket{11}$.  In (b) the tensor is post-selected to
$\bra{0}$ resulting in the entangled state
$\ket{00}+\ket{01}+\ket{10}$.}\label{fig:andtensor}
\end{figure}

In this work we are particularly concerned with network constructions as a means 
to study many-body quantum states by tensor networks. First, we can compose
\AND-states (by connecting wires and hence contracting tensor indices) --- together
with \NOT-gates, this enables one to create the class of Boolean
states in Equation \eqref{eqn:Booleanstates}. That is, one will realize a network
that outputs
logical-one (represented here as $\ket{1}$) any time the input qubits represent a
desired term in a quantum state
(e.g.\ create a function that outputs logical-one on designated inputs
$\ket{00}$, $\ket{01}$ and $\ket{10}$ and zero otherwise as shown in Figure
\ref{fig:andtensor}). We then insert a $\ket{1}$
at the network output.  This procedure
recovers the desired Boolean state as illustrated in Figure \ref{fig:bs}(a) with the
resulting state appearing in Equation \eqref{eqn:Booleaninput}.  
\begin{equation} \label{eqn:Booleaninput}
\sum_{x_1,x_2,...,x_n\in\{0,1\}}\braket{1}{f(x_1
, x_2 ,...,x_n)}\ket{x_1,x_2,...,x_n}
\end{equation} 
The network representing the circuit is read backwards from output to
input.  Alternatively the full class of Boolean states is defined as:
\begin{definition}[Boolean many-body qudit states]
We define the class of Boolean states as those states which can be
expressed up to a global scalar factor in the form \eqref{eqn:Booleanstates}
\begin{equation}\label{eqn:Booleanstates}
\sum_{x_1,x_2,...,x_n\in\{0,1,...,d-1\}}\ket{x_1,x_2,...,x_n}\ket{f(x_1
, x_2 ,...,x_n)}
\end{equation} 
where $f:\7Z^n_d\rightarrow\7Z_d$ is a $d$-switching function and the sum is
taken to be over all variables $x_j$ taking 0 and 1 for qubits and $0, 1,...d-1$ in
the case of $d$-level qudits (see Figure \ref{fig:bs} (a)).  
\end{definition}

\begin{figure}[h]
\includegraphics[width=0.5\textwidth]{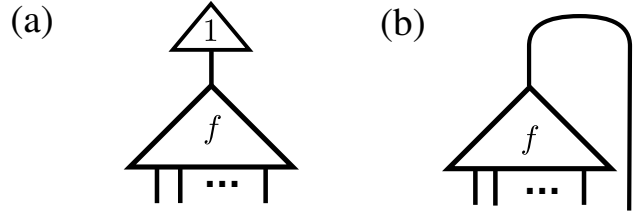}
\caption{A general multi-valued qudit state based on a $d$-switching function $f$ can
either be formed as (a) by inputting a logical-one at the output of the circuit as
described by Equation~\eqref{eqn:Booleaninput} or (b) by bending the output of the
circuit around to form an input as in Equation~\eqref{eqn:Booleanstates}.  For
multi-valued qudit states, the boolean function $f:\7 Z_2^n\rightarrow \7Z_2$ becomes
a multi-valued
qudit function $f:\7 Z^n_d\rightarrow \7Z_d$.  The network (a) is then
post-selected to $\alpha_0\ket{0}+\alpha_1\ket{1}+\cdots\alpha_{d-1}\ket{d-1}$ where
$\forall i, \alpha_i=0/1$. }\label{fig:bs}
\end{figure}

Examples of Boolean states include the familiar {\sf GHZ}-state
$\ket{00\cdots0}+\ket{11\cdots1}$ which on qudits in dimension $d$ becomes 
\begin{equation}
 \ket{{\sf GHZ}_d}= \sum_{i=0}^{d-1} \ket{i}\ket{i}\ket{i} =
\ket{0}\ket{0}\ket{0}+\ket{1}\ket{1}\ket{1}+\cdots+\ket{d-1}\ket{d-1}\ket{d-1} 
\end{equation}
as well as the \W-state
$\ket{00\cdots1}+\ket{01\cdots0}+\cdots+\ket{10\cdots0}$ which again on qudits
becomes (in Equation (\ref{eqn:w-stated}) the operator
$X\ket{m}=\ket{m+1(\text{mod}~d)}$ is
one way to define negation in higher dimensions.  The subscript labels the ket
(labeled 1,2 or 3 from left to right) the operator acts on $i$ times.) 
\begin{align}\label{eqn:w-stated}
\ket{\W_d} &:= \sum_{i=1}^{d-1} \sum_{j=1}^{3} (X_j)^i\ket{0}\ket{0}\ket{0} =
\ket{0}\ket{0}\ket{1}+\ket{0}\ket{1}\ket{0}+\ket{1}\ket{0}\ket{0}+\ket{0}\ket{0}\ket{
2}+\ket{0}\ket{2}\ket{0}+\ket{2}\ket{0}\ket{0}+\cdots\\\nonumber 
&\cdots+\ket{0}\ket{0}\ket{d-1}+\ket{0} \ket{d-1}\ket{0} +\ket{d-1}\ket{0}\ket{0} 
\end{align}
What
is clear from this definition is that Boolean states are always composed of equal
superpositions of sets of computational basis states, as the allowed scalars take binary values, 0,1. Our main result is that,
despite this apparent limitation,
tensor networks composed only of Boolean components can nonetheless describe
any quantum state. To do this we require a minor extension to include superposition
input/output states, e.g. rank-1 tensors of the form
$\ket{0}+\beta_1\ket{1}+\cdots+\beta_{d-1}\ket{d-1}$. This
gives a universal class of generalized Boolean tensor networks
which subsumes the important subclass of Boolean states.
This class is then shown to form a nascent example
of the exhaustiveness of CTNS and to give rise to a wide class of
quantum states that we show are exactly and efficiently sampled.  

\subsection{Putting it all together: connecting the dots}\label{sec:connectingdots}
The key point to this result is that the introduction of Boolean logic gate tensors
into the tensor network context
allows the seminal logic gate universality results from classical network
theory to be applied in the setting of tensor network states. By extending this result we can
construct a solution to the related
quantum problem --- that is, the decomposition or factorization of any quantum state
into a CTNS. Thus our main result is captured 
by the following statement (see Theorem~\ref{theorem:rep}).\\

\noindent \textbf{Result} (Translating quantum states into categorical tensor
networks). Given quantum state $\ket{\psi}$,
Theorem~\ref{theorem:rep} asserts a constructive method to factor $\ket{\psi}$ into a
CTNS constructed from rank-3, rank-2 tensors taken solely from the fixed set in the 
presentation from Figure~\ref{fig:F2-presentation} and \ref{fig:extraF2}
together with arbitrary rank-1 tensors.\\

This example then demonstrates the exhaustiveness of the most extreme case of the CTNS approach,
where almost all tensors are chosen from a small fixed set of tensors with precisely defined 
algebraic properties. Importantly, in Theorem \ref{corollary:states} the form of
this general construction is limited in such a way as
to provide a new class of states which can be exactly and efficiently sampled.

\section{constituent network components: a tensor tool box}\label{sec:components}

Any vector space $\2V$ has a dual $\2V^*$: this is the space of linear functions $f$
from $\2V$ to the ground field $\7C$, that is $f: \2V \rightarrow \7C$. This defines
the dual uniquely.  We must however fix a basis to identify the vector
space $\2V$ with its dual.  Given a basis, any basis vector $\ket{i}$ in $\2V$ gives
rise to a basis vector
$\bra{j}$ in $\2V^*$ defined by $\braket{j}{i} = \delta^j_i$ (Kronecker's delta). 
This
defines an isomorphism $\2V\rightarrow \2V^*$ sending $\ket{i}$ to $\bra{i}$ and
allowing us
to identify $\2V$ with $\2V^*$.  In what follows, we will fix a particular
arbitrarily chosen basis (called the computational basis in quantum information
science).  We will now concentrate on Boolean building blocks that are used
in our construction. 

\subsection{{\sf COPY}-tensors: the ``diagonal''}\label{sec:COPY}

The copy operation arises in digital circuits~\cite{Davio78, Weg87}
and more generally, in the context of category theory and Algebra, where it is
called a diagonal in cartesian categories. The
operation is readily defined in any finite dimension as
\begin{equation}
\bigtriangleup \bydef \sum_{i=0}^{d-1} \ket{ii}\bra{i}
\end{equation}
As $\ket{0}$ and
$\ket{1}$ are
eigenstates of $\sigma^z$, we might give $\bigtriangleup$ the alternative name of
{\sf Z}-copy.  In the
case of qubits {\sf COPY} is
succinctly presented by considering the map $\bigtriangleup$ that copies
$\sigma^z$-eigenstates:
$$
\bigtriangleup:\7C^2\rightarrow\7C^2\otimes\7C^2::
\begin{cases}
\ket{0} \mapsto \ket{00}\\
\ket{1} \mapsto \ket{11}
\end{cases}
$$
This map can be written in operator form as $\bigtriangleup:\ket{00}\bra{0} +
\ket{11} \bra{1}$ and
under cup/cap induced duality (on the right bra) this state becomes a {\sf GHZ}-state
as
$\ket{\psi_{\sf GHZ}} = \ket{000} + \ket{111}\cong\ket{00}\bra{0} +
\ket{11} \bra{1}$. The standard properties of {\sf COPY} are
given diagrammatically in Figure~\ref{fig:copygate} and a list of its relevant
mathematical properties are found in Table~\ref{fig:copygatesum}.

\begin{figure}[h]
\includegraphics[width=0.9\textwidth]{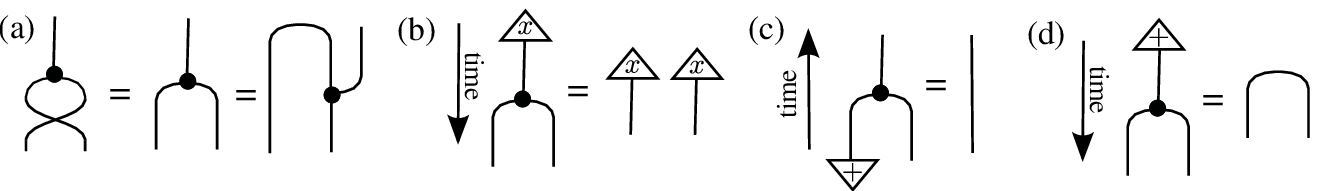}
\caption{Salient diagrammatic properties of the {\sf COPY}-dot. (a) Full-symmetry.
(b)
Copy points, e.g.\
$\ket{x}\mapsto\ket{xx}$ for $x=0,1$ for qubits and $x=0,1,...,d-1$ for $d$
dimensional qudits.  (c) The unit --- in this case
the unit corresponds to deletion, or a map to the
terminal object which is given as $\bra{+}\bydef\bra{0}+\bra{1}$ for qubits and
$\bra{+}\bydef\bra{0}+\bra{1}+\cdots+\bra{d-1}$ for $d$
dimensional qudits (the
bi-direction of time is
explained later by considering co-diagonals in Section~\ref{sec:co-diagonal}).  (d)
Co-interaction with the unit
creates a Bell state $\sum_{i=0}^{d-1}\ket{ii}$.  This and the corresponding dual
under the dagger form the compact structures of
the $\dagger$-category of
quantum theory.}\label{fig:copygate}
\end{figure}

\begin{remark}[The {\sf COPY}-gate from \CNOT]
The \CNOT-gate is defined as $\ket{0}\bra{0}_1\otimes \eye_2 +
\ket{1}\bra{1}_1\otimes \sigma^x_2$.  We will set the input that the target acts on
to $\ket{0}$ then calculate $\CNOT (\eye_1\otimes
\ket{0}_2)=\ket{0}\bra{0}_1\otimes \ket{0}_2 + \ket{1}\bra{1}_1\otimes
\ket{1}_2$.  We have hence defined the desired {\sf COPY} map copying states from the
Hilbert space with label $1$ (subscript) to the joint Hilbert space labeled $1$ and
$2$.
\end{remark}

\subsection{\XOR-tensors: the ``addition''}\label{sec:XOR}
The \XOR-gate implements exclusive disjunction or addition (mod 2 for
qubits)
and is denoted by the
symbol $\oplus$~\cite{Cohn62, xor70}. We note that for multi-valued
logic a modulo
subtraction gate can also be defined as in~\cite{BB11}.
By what could be called ``dot-duality'' the \XOR-gate is simply a Hadamard
transform of the {\sf COPY}-gate, appropriately applied to all of the dots
legs. (We denote the
discrete Fourier transform gate by~$H_{\hilbert{H}} := \frac{1}{\sqrt{d}}
\sum_{a,b\in\{0,1,...,d-1\}}e^{i 2 \pi a b/d}
\ketbra{a}{b}_{\hilbert{H}}$, where $d = \dim \hilbert{H}$ is the dimension of the
Hilbert space the gate
acts in. We can see that $H^T = H$, and that
in a qubit system $H$~coincides with the one-qubit Hadamard gate~\cite{BB11}.)  This
can be captured diagrammatically in the slightly different form: 
\begin{center}
\includegraphics[width=0.250\textwidth]{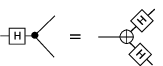}
\end{center}
To define the gate on the computational basis, we
consider $f(x_1,x_2)=x_1\oplus x_2$ then $f=0$ corresponds to
$(x_1,x_2)\in\{(0,0),(1,1)\}$ and $f=1$ corresponds to $(x_1,x_2)\in\{(1,0),(0,1)\}$,
where the
truth table for $\XOR$ follows
\begin{center}
\begin{tabular}{c|c|c}
$~x_1~$ & $~x_2~$ & $f(x_1,x_2)=x_1\oplus x_2$ \\ \hline
0 & 0 & 0 \\
0 & 1 & 1 \\
1 & 0 & 1 \\
1 & 1 & 0
\end{tabular}
\end{center}
Under cap/cap induced duality, the state defined by $\XOR$ is given as
\begin{equation}
\ket{\psi_\oplus}\bydef\sum_{x_1,x_2\in\{0,1\}}\ket{x_1}\ket{x_2}\ket{f(x_1,x_2)}
=\ket { 000 }
+\ket { 110 }
+\ket{011}+\ket{101}
\end{equation}
which is in the {\sf GHZ}-class by LOCC equivalence viz.
$\ket{\psi_\oplus}=\H\otimes\H\otimes\H(\ket{000}+\ket{111})$.  The operation of
\XOR~is
summarized in Table~\ref{fig:xorgatesum}.  Since the
\XOR-gate is related to the {\sf COPY}-gate by a change of basis, its diagrammatic
laws
have the same structure as those illustrated in Figure~\ref{fig:copygate}.  The gate
acting backwards (co-\XOR) is
defined on a basis as follows:
$$
\oplus:\7C^2\rightarrow\7C^2\otimes\7C^2::
\begin{cases}
\ket{0} \mapsto \ket{00}+\ket{11}\\
\ket{1} \mapsto \ket{10}+\ket{01}
\end{cases}~~~~\text{or equivalently}~~~~~~~\begin{cases}
\ket{+} \mapsto \ket{++}\\
\ket{-} \mapsto \ket{--}
\end{cases}
$$

\subsection{Generating the affine class of networks}
Thus far we have presented the {\sf XOR}- and {\sf COPY}- gates.  This system allows
us to create the linear class of Boolean functions.  As explained in the present
subsection, this class can be extended to to the affine class by introducing either a
gate that acts like an inverter, or by appending a constant $\ket{1}$ into our
system.  This constant will allow us to use the {\sf XOR}-gate to create an
inverter.   

A \textit{complemented Boolean variable} is a
Boolean variable that appears in negated form, that is $\neg x$ or
written equivalently as $\overline{x}$. Negation of a Boolean variable $x$ can be
expressed as the \XOR{} of
the variable with constant $1$ as $\overline{x}=1\oplus x$. Whereas
\textit{uncomplimneted Boolean
variables} are Boolean variables that do not appear in negated form (e.g.\ negation
is not allowed).  Linear Boolean functions contain terms with uncomplemented Boolean
variables that
appear individually (e.g.\ variable
products are not allowed such as $x_1x_2$ and higher orders etc., see
Section \ref{sec:Boolean}).  Linear Boolean functions take the general form
\be
f(x_1,x_2,...,x_n)=c_1x_1\oplus c_2x_2\oplus ...\oplus c_nx_n
\ee 
where the vector $(c_1,c_2,...,c_n)$ uniquely determines the function.  The affine
Boolean functions take the same general form as linear functions.  However, functions
in the affine class allows variables to appear in both complemented
and uncomplemented form. Affine Boolean functions take the general form
\begin{equation}\label{eqn:affine}
f(x_1,x_2,...,x_n)=c_0\oplus c_1x_1\oplus c_2x_2\oplus ...\oplus c_nx_n
\end{equation}
where $c_0=1$ gives functions outside the linear class.  From the identities,
$1\oplus 1=0$ and $0\oplus x=x$ we require the introduction of only one constant
($c_0$), see Appendix \ref{sec:Boolean}.

Together, \XOR{} and {\sf COPY} are not universal for classical circuits. When used
together, \XOR- and
{\sf COPY}-gates compose to create networks representing the class of linear
circuits. The affine circuits are generated by considering the constant $\ket{1}$.
The state $\ket{1}$ is
indeed copied by the black dot. However, our axiomatization
(Figure~\ref{fig:F2-presentation}) proceeds through 
considering the \XOR- and {\sf COPY}-gates together with $\ket{+}$, the unit for
{\sf COPY} and $\ket{0}$ the unit for \XOR.  It is by appending the constant
$\ket{1}$ into the
formal system (Figure~\ref{fig:F2-presentation}) that the affine class of circuits
can be realized.


\begin{remark}[Affine functions correspond to a basis]
Each affine function is labeled by a corresponding bit pattern.  This can be
thought of as labeling the computational basis, as
states of the form $\ket{\{0,1\}^n}$ are in correspondence with polynomials in
algebraic normal form (see Appendix \ref{sec:Boolean}).
\end{remark}

\subsection{Quantum \AND-state tensors: Boolean universality}\label{sec:AND}
The proceeding sections have introduced enough machinery to generate the linear and
affine classes of classical circuits.  These classes are not universal.  To recover
a universal system we must add a non-linear Boolean gate.  We do this by
representing the \AND{} gate as a tensor.  The unit for this gate is $\bra{1}$ and
so can be used to elevate the linear fragment to the affine class.  

The \AND{} gate (that is, $\wedge$) implements logical conjunction~\cite{Davio78,
Weg87}.  Using again ``dot-duality'', the \AND-gate relates to the \OR-gate via De
Morgan's law.
This can be captured diagrammatically as
\begin{center}
\includegraphics[width=0.35\textwidth]{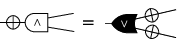}
\end{center}
To define the gate on the computational basis, we
consider $f(x_1,x_2)=x_1\wedge x_2$ which we write in short hand as $x_1 x_2$.  Here
$f=0$
corresponds to $(x_1,x_2)\in\{(0,0),(0,1),(1,0)\}$ and $f=1$ corresponds to
$(x_1,x_2)=(1,1)$.

Under cap/cap induced duality, the state defined by $\AND$ is given as
\begin{equation}\label{eqn:quantumAND}
\ket{\psi_\wedge}\bydef\sum_{x_1,x_2\in\{0,1\}}\ket{x_1}\ket{x_2}\ket{f(x_1,x_2)}
=\ket { 000 }
+\ket { 010 }
+\ket{010}+\ket{111}
\end{equation}
The key diagrammatic properties of {\sf AND}
are presented  in Figure~\ref{fig:ANDgate} and the gate is 
summarized in Table~\ref{fig:andgate}.  The gate
acting backwards (co-\AND) is
defined on a basis as follows: 
$$
\wedge:\7C^2\rightarrow\7C^2\otimes\7C^2::
\begin{cases}
\ket{0} \mapsto \ket{00}+\ket{01}+\ket{10}\\
\ket{1} \mapsto \ket{11}
\end{cases}~~~~\text{or}~~~~~~~\begin{cases}
\ket{+} \mapsto \ket{++}\\
\ket{-} \mapsto \ket{00}+\ket{01}+\ket{10} - \ket{11}
\end{cases}
$$

\begin{figure}[h]
\includegraphics[width=0.9\textwidth]{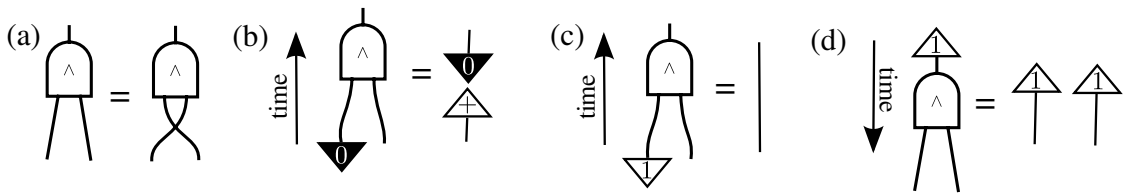}
\caption{Salient diagrammatic properties of the \AND-tensor. (a) Input-symmetry. (b)
Existence of a zero or fixed-point.
(c) The unit $\ket{1}$.  (d) Co-interaction with the unit
creates a product-state.  Note that the gate forms a valid
quantum operation when run backwards as in (d).}\label{fig:ANDgate}
\end{figure}

\begin{example}[\AND-states from Toffoli-gates]\label{ex:ANDfromtoff}
The \AND-state is readily constructed from the Toffoli gate as illustrated in
Figure~\ref{fig:ANDfromtoff}.  This allows some
interesting states to be created 
experimentally, for instance, post-selection of the output to $\ket{0}$
would yield the state $\ket{00}+\ket{01}+\ket{10}$. (See the course
notes~\cite{course} for more on how these
techniques can be used as an experimental prescription to generate quantum states. See also Figure \ref{fig:hadamard} for the connection to computationally universal quantum circuits.)
\end{example}

\begin{figure}[h]
\begin{center}
\includegraphics[width=0.80\textwidth]{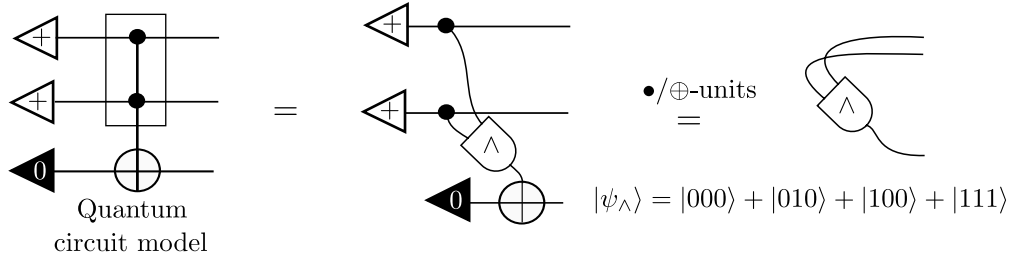}
\caption{Illustrates the use of units to prepare
the \AND-state.  Using
this state together with single qubit \NOT-gates, one can construct any Boolean
qubit state as well as any of the states appearing in Table~\ref{fig:BooleanStates}. 
We note that the box around the Toffoli gate (left) is meant to illustrate a
difference between our notation and that of quantum circuits.}\label{fig:ANDfromtoff}
\end{center}
\end{figure}


\begin{figure}[h]
\includegraphics[width=0.750\textwidth]{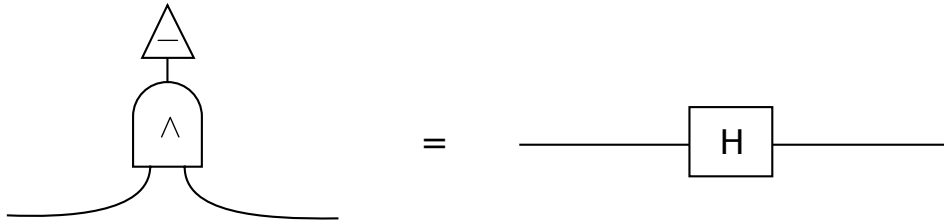}
\caption{Hadamard built from the \AND-state
together with $\ket{-}\bydef\frac{1}{\sqrt{2}}(\ket{0}-\ket{1})$.  We note
that quantum computational universality is already possible by considering simple Hadamard
states (e.g.\ $\ket{\psi_\H} = \ket{00}+\ket{01}+\ket{10}-\ket{11}$), {\sf COPY}-
and \AND-states, which follows from the proof that Hadamard and Toffoli are universal
for quantum circuits~\cite{A03}.}\label{fig:hadamard}
\end{figure}

\subsubsection{Summary of the {\sf XOR}-algebra on tensors}

We will now present the three previously referenced Tables (\ref{fig:copygatesum},
\ref{fig:xorgatesum} and \ref{fig:andgate}) which summarize the quantum logic tensors
we introduced in the previous subsections (\ref{sec:COPY}, \ref{sec:XOR} and
\ref{sec:AND}).  The tables contain entries listing properties that describe how
the introduced network components interact.  These interactions
are defined diagrammatically and explained in Section \ref{sec:interaction}.  

\begin{center}
\begin{table}[h]
\begin{tabular}{|c|c|c|c|}\hline
Gate Type & Co-copy point(s) & Unit & Co-unit Interaction\\\hline
{\sf COPY} & $\ket{0}$,$\ket{1}$  & $\ket{+}$  & Bell state: $\ket{00}+\ket{11}$
 \\ \hline\hline
Symmetry & Associative & Commutative & Frobenius Algebra\\\hline
Full  & Yes         & Yes & Yes (Node Equivalence) \\ \hline
\end{tabular}
\caption{Summary of the {\sf COPY}-gate from Section
\ref{sec:COPY}.}\label{fig:copygatesum}
\end{table}
\end{center}

\begin{center}
\begin{table}[h]
\begin{tabular}{|c|c|c|c|}\hline
Gate Type & Co-copy point(s) & Unit & Co-unit Interaction\\\hline
\XOR & $\ket{+}$,$\ket{-}$  & $\ket{0}$  & Bell state: $\ket{00}+\ket{11}$
 \\ \hline\hline
Symmetry & Associative & Commutative & Frobenius Algebra\\\hline
Full  & Yes         & Yes & Yes (Node Equivalence) \\ \hline
\end{tabular}
\caption{Summary of the \XOR-gate from Section \ref{sec:XOR}.}\label{fig:xorgatesum}
\end{table}
\end{center}

\begin{center}
\begin{table}[h]
\begin{tabular}{|c|c|c|c|}\hline
Gate Type & Co-copy point(s) & Unit & Co-unit Interaction\\\hline
\AND & $\ket{1}$  & $\ket{1}$  & Product state: $\ket{11}$
 \\ \hline\hline
Symmetry   & Associative & Commutative & Bialgebra Law\\\hline
Inputs  & Yes         & Yes         & Yes (with {\sf GHZ}) \\ \hline
\end{tabular}
\caption{Summary of the \AND-gate from Section \ref{sec:AND}.}\label{fig:andgate}
\end{table}
\end{center}

\subsection{co-{\sf COPY}: the co-diagonal}\label{sec:co-diagonal}
What is evident from our subsequent discussions on logic gates is that in the context
of tensors, the bending of wires implies that gates can be used both forwards in backwards. We can therefore form tensor networks from Boolean gates
in a very
different way from classical circuits. Indeed, it becomes possible to flip a {\sf
COPY} operation upside down,
that is, instead of having a single leg split into two legs, have two legs merge into
one.  In
terms of tensor networks, co-{\sf COPY} is simply
thought of as being a dual (transpose) to the familiar {\sf COPY} operation. This is
common in
algebra: to consider the dual
notation to algebra, that is co-algebra.  In general, while a product is a joining or pairing (e.g.\ taking two
vectors and producing a third) a co-product is a co-pairing taking a single vector
in the space $\2A$ and producing a vector in the space $\2A\otimes \2 A$.

\begin{remark} [co-algebras~\cite{FA}]
co-algebras are structures that are dual (in the sense of
reversing arrows) to unital associative algebras such as {\sf COPY} and
\AND{} the axioms of which we formulated in terms of picture calculi
(Sections \ref{sec:COPY} and \ref{sec:AND}).  Every co-algebra, by
(vector space) duality, gives rise to an algebra, and in finite dimensions, this
duality goes in both directions.
\end{remark}

Co-{\sf COPY} can be thought of as applying a delta function in the transition from
input to output.  That is, given a copy point $x=0,1,...,d-1$ for qudits on dim $d$.
 Depicting {\sf COPY} as the map
$\bigtriangleup$
\begin{equation}
\bigtriangleup(\ket{x})=\ket{x}\otimes\ket{x}
\end{equation}
we define co-{\sf COPY} by the map $\bigtriangledown$ such that 
\begin{equation}
\bigtriangledown(\ket{i},\ket{j})=\delta_{ij} \ket{i}
\end{equation}
that is, the diagram is mapped to zero (or empty) if the inputs $\ket{i}$, $\ket{j}$
do not agree.  This is succinctly expressed in terms of a delta-function dependent on
inputs $\ket{i}$, $\ket{j}$ where $i,j=0,1,...,d-1$ for qudits of dim $d$.

\begin{example}[Simple co-pairing]
Measurement effects on tripartite quantum systems can be thought of as
co-products.  This is given as a map from one system (measuring the first) into two
systems (the effect this has on the other two). {\sf GHZ}-states are prototypical
examples of co-pairings.  In this case, the measurement outcome of $\ket{0}$
($\ket{1}$) on a single subsystem sends the other qubits to $\ket{00}$ ($\ket{11}$)
and by linearity this sends $\ket{+}$ to $\ket{00}+\ket{11}$.  
\end{example}

\subsection{The remaining Boolean tensors: \NAND-states etc.}
We have represented a logical system on tensors --- this enables us
to represent any Boolean function as a connected network of tensors and hence any Boolean state.
We chose as our generators,
constant $\ket{1}$, {\sf COPY}, \XOR, \AND{}.  Other generators could have also been
chosen such as \NAND-tensors. Our choice however, was made as a matter of convenience. If we
had considered other generators, we could have ended up considering the
following cases: weak-units~(Definition \ref{def:weakunits}) and fixed point
pairs~(Definition \ref{def:fixedpoints}). We note that the
\NAND-states were used in \cite{1996quant.ph..5011S} for fault-tolerant quantum
computation --- see also \cite{toffoli-state}.

\begin{definition}[Weak units]\label{def:weakunits}
An algebra (or product see Appendix \ref{sec:newalgebra}) on a tripartite state
$\ket{\psi}$ has a unit (equivalently, one has that the state is unital) if there
exists an effect $\bra{\phi}$ which
the product acts on to produce an invertible map $B$, where $B=\eye$ (see
Example~\ref{ex:weakunits}). If no such $\bra{\phi}$ exists to make $B=\eye$, and $B$
has an inverse, we call $\bra{\phi}$ a weak unit, and say the state $\ket{\psi}$ is
weak unital and if $B\neq \eye$ and $B^2=1$ we call
the algebra on $\ket{\psi}$ unital-involutive.  This scenario is given
diagrammatically
as:
\begin{center}
\includegraphics[width=0.50\textwidth]{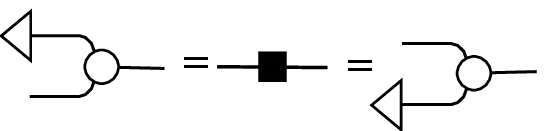}
\end{center}
\end{definition}
\begin{example}[\NAND{} and \NOR{}]\label{ex:weakunits}
\NAND{} and \NOR{} have weak units, respectively given by $\ket{1}$ and $\ket{0}$.
These weak units are unital-involutive.
\begin{equation}
 \ket{\psi_\NAND} = \ket{001}+\ket{011}+\ket{101}+\ket{110}
\end{equation}
\begin{equation}
 \ket{\psi_\NOR} = \ket{001}+\ket{010}+\ket{100}+\ket{110}
\end{equation}
For $\ket{\psi_\NAND}$ to have a unit, there must exist a $\ket{\phi}$ such that
\begin{equation}
 \braket{\overline{\phi}}{0}\ket{01}+\braket{\overline{\phi}}{0}\ket{11}+\braket{
\overline{\phi}}{1}\ket{01} +\braket{\overline{\phi}}{1}\ket{10}=\ket{00}+\ket{11}
\end{equation}
and hence no choice of $\ket{\phi}$ makes this possible, thereby confirming the
claim.
\end{example}

\begin{definition}[Fixed Point Pair]\label{def:fixedpoints}
An algebra (see Appendix \ref{sec:newalgebra}) on a tripartite state
$\ket{\psi}$ has a \textit{fixed point} if
there exists an effect $\bra{\phi}$ (the fixed point) which
the product acts on to produce a constant output, independent of the other input value. 
For instance, in Figure \ref{fig:fixedpoints}(c) on the left hand side the effect
$\bra{1}$ induces a map (read bottom to top) that sends $\ket{+}\mapsto \ket{1}$. 
Up to a scalar, this map expands linearly sending both basis effects $\bra{0}$,
$\bra{1}$ to to the constant state $\ket{1}$. If the resulting output is the same as
the fixed point, we say $\bra{\phi}$ has a zero ($\ket{1}$ is the zero for the
\OR-gate in Figure \ref{fig:fixedpoints}(c)). A
fixed point pair consists of two algebras with fixed points, such that the fixed
point of one algebra is the unit of the other, and vise versa (see
Figure~\ref{fig:fixedpoints2} and \ref{fig:fixedpoints}). Diagrammatically this
is given in Figure \ref{fig:fixedpoints2}.  
\end{definition}

\begin{figure}[h]
\includegraphics[width=0.750\textwidth]{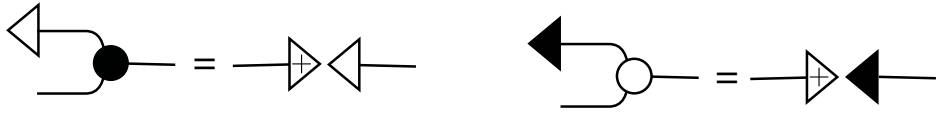}
\caption{Diagrammatic equations satisfied by a fixed point
pair (see Definition \ref{def:fixedpoints}). }\label{fig:fixedpoints2}
\end{figure}

\begin{figure}[h]
\includegraphics[width=0.8\textwidth]{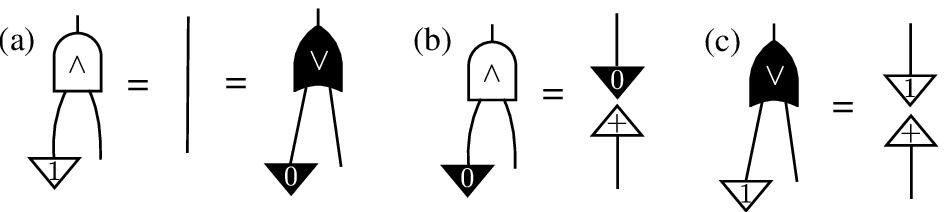}
\caption{\AND{} and \OR{} tensors form a fixed point pair.  The unit for \AND{}
($\ket{1}$
see a) is the zero for \OR{} (c) and vise versa: the unit of \OR{} ($\ket{0}$
see a) is the zero for \AND{} (b).}\label{fig:fixedpoints}
\end{figure}

\subsection{Summarizing: network composition of quantum logic tensors}
We have considered sets of universal classical structures in our tensor
network model.  In classical computer science, a universal set of gates is able to
express any $n$-bit Boolean function
\begin{equation}
  f:\7B^n\rightarrow\7B::(x_1,...,x_n)\mapsto f(x_1,...,x_n)
\end{equation}
where we note that $\7Z_2\cong\7B$ allowing us to use the alternative notation for
$f$ as $f:\7Z_d^n\rightarrow\7Z_d$ with $d=2$ for the binary case. Universal sets
include \{{\sf COPY}, \NAND\}, \{{\sf COPY}, \AND,~\NOT\}, \{{\sf COPY},
\AND,~\XOR,$\ket{1}$\}, \{\OR,~\XNOR,$\ket{1}$\}
and others.  One can also consider the states $\ket{\psi}$ formed by the bit patterns
of
these functions $f(x_1,x_2)$ as
\be
\ket{\psi_f} = \sum_{x_1,x_2\in\{0,1\}}\ket{x_1}\ket{x_2}\ket{f(x_1,x_2)}
\ee
This allows a wide class of states to be constructed effectively.  In the
following
Table (\ref{fig:BooleanStates}) we illustrate the quantum states representing the classical
function of two-inputs.

\begin{center}
\begin{table}[h]
\begin{tabular}{|c|c|}\hline
non-linear & linear (Frobenius Algebras) \\\hline
$\ket{\psi_\AND}=\ket{000}+\ket{010}+\ket{100}+\ket{111}$ &  \\
~$\ket{\psi_\OR}=\ket{001}+\ket{011}+\ket{101}+\ket{111}$ &
~$\ket{\psi_\XOR}=\ket{000}+\ket{011}+\ket{101}+\ket{110}$\\
$\ket{\psi_\NAND}=\ket{001}+\ket{011}+\ket{101}+\ket{110}$ &
$\ket{\psi_\XNOR}=\ket{001}+\ket{010}+\ket{100}+\ket{111}$\\
~$\ket{\psi_\NOR}=\ket{001}+\ket{010}+\ket{100}+\ket{110}$ & \\\hline
\end{tabular}
\caption{The bit pattern of these quantum states represents a Boolean function (given
by the subscript) such that the right most bit is the Boolean functions output, and
the two left bits are the functions inputs, and the non-linear Boolean functions are
on the left side of the table and the linear functions on the right.  Consider the
state $\ket{\psi_\AND}$, and Boolean variables $x_1$ and $x_2$, then the
superposition
$\ket{\psi_\AND}$ encodes the function $\ket{x_1,x_2,x_1\wedge x_2}$ in each term in
the
superposition, and $\ket{\psi_\AND}=\sum_{x_1,x_2\in\{0,1\}}\ket{x_1,x_2,x_1\wedge
x_2}$.
As outlined in the text, cup/cap induced-duality allows us (for instance) to express
this state as the operator
$\ket{0}\bra{00}+\ket{0}\bra{01}+\ket{0}\bra{01}+\ket{1}\bra{11}::\ket{x_1,x_2}
\mapsto\ket{x_1\wedge x_2}$ which projects qubit states to the $\AND$ of their bit
value.}\label{fig:BooleanStates}
\end{table}
\end{center}

\section{Interaction of the network components}\label{sec:interaction}
Having outlined the Boolean components used in our tensor toolbox we now
explore how these tensors interact when connected in a tensor network. The
interactions can be defined diagrammatically and given simple rewrite rules for CTNS
based on these component tensors.

\subsubsection{Merging {\sf COPY}-dots by node equivalence}
{\sf COPY}-dots are readily generalized to an arbitrary number of input and
output legs. As one would rightly suspect,
a {\sf COPY}-dot with $n$~inputs and $m$~outputs corresponds to an
$n+m$-partite {\sf GHZ}-state.  Neighboring dots of the same type can be merged into
a single dot: this is called node equivalence in digital circuits. {\sf
COPY}-dots represent Frobenius algebras~\cite{Carboni-Walters,FA, redgreen}.  (Note that the recent online version 
of \cite{redgreen} was already influenced by the arXiv version of the present manuscript as well as \cite{BB11, course}.) 

\begin{theorem}[Node equivalence or spider
law]\label{theorem:spider} Given a
connected graph
with $m$~inputs and $n$~outputs comprised solely of \COPY{} dots of
equal dimension, this map
can be equivalently expressed as a single $m$-to-$n$ dot, as shown
in Figure~\ref{fig:Spider}.
\end{theorem}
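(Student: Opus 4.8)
The plan is to prove the spider law by passing to the computational basis, where every \COPY-dot becomes a generalized Kronecker delta, and then showing that contracting a connected network of such deltas again yields a single delta. First I would fix the normal form of the target: the $m$-to-$n$ \COPY-dot is $\Delta_{m,n}\bydef\sum_{i=0}^{d-1}\ket{i}^{\otimes n}\bra{i}^{\otimes m}$, whose components in the computational basis equal $1$ when all $m+n$ indices coincide and $0$ otherwise. Thus each individual dot in the network, regardless of its own arity, is the tensor that forces all of its incident legs to carry a common basis value (this is exactly the $n+m$-partite {\sf GHZ} description noted just before the statement).

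Second, I would set up the contraction combinatorially. Let $G$ be the given connected graph with vertex set $V$ (the dots), external legs carrying the $m+n$ open indices, and internal edges carrying summed-over indices. For a fixed assignment of the external indices the value of the network is $\sum_{\text{internal}}\prod_{v\in V}\delta_v$, where $\delta_v$ equals $1$ exactly when all legs incident to $v$ agree. The product is nonzero iff the ``forced-equal'' relation holds at every vertex simultaneously; since an internal edge is a single shared variable at its two endpoints, the transitive closure of these relations over the edges of $G$ identifies all legs lying in one connected component. Because $G$ is connected, this forces every external and internal leg to a single common value $i$, so for each $i=0,\dots,d-1$ precisely one term of the internal sum survives. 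Collecting these gives $\sum_{i=0}^{d-1}\ket{i}^{\otimes n}\bra{i}^{\otimes m}=\Delta_{m,n}$, which is the claim.

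The step I expect to be the main obstacle is the bookkeeping of internal structure: multiple edges between the same pair of dots, or self-loops produced when one mentally merges vertices, could in principle introduce spurious scalar factors from the collapsed internal sums. This is precisely where I would invoke the special/Frobenius property of \COPY\ recorded in Table~\ref{fig:copygatesum} (diagrammatically $\bigtriangledown\circ\bigtriangleup=\eye$): it guarantees that each redundant internal edge contracts away without leaving a factor of $d$, so the identity holds on the nose up to the global scale factors the paper already agrees to drop. Making the connectivity argument fully rigorous --- e.g.\ by an induction on the number of internal edges, or a union--find description of the equivalence classes of legs --- is routine once this scalar issue is handled.

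Finally, I would note an alternative, purely diagrammatic proof by induction on $|V|$: using commutativity, associativity and coassociativity to gather the parallel wires between two adjacent dots, and then the Frobenius and special laws to fuse them into one dot while removing the resulting bubble, each merge strictly decreases $|V|$ while preserving the external counts $(m,n)$, terminating in a single $m$-to-$n$ dot. The same self-loop removal powered by the special property is again the crux, which is why I would present the basis computation as the primary argument.
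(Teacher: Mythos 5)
Your proof is correct, but it takes a genuinely different route from the paper, which offers no self-contained proof at all: there the statement is justified by appeal to the literature on special commutative Frobenius algebras \cite{Carboni-Walters,FA,redgreen}, for which the spider/normal-form theorem is a standard result, with only the caption of Figure~\ref{fig:Spider} hinting at the argument you carry out (``connecting a state $\ket{\phi}$ to one of the legs and iterating over a complete basis''). Your primary argument --- the components of every \COPY{}-dot are generalized Kronecker deltas, and for a fixed assignment of the open legs the internal sum $\sum_{\mathrm{internal}}\prod_{v}\delta_v$ has exactly one surviving term precisely when connectivity forces all legs to a common value $i$ --- is elementary and complete; in fact it is slightly stronger than you claim, since for a connected graph with at least one open leg the surviving internal assignment is unique even in the presence of parallel edges and self-loops, so no factor of $d$ ever appears and no separate appeal to the special law $\bigtriangledown\circ\bigtriangleup=\eye$ is needed (that law is itself an instance of your computation; the only scalar anomaly is the degenerate closed diagram with $m=n=0$, which contracts to the number $d$ and is covered by the paper's stated convention of dropping global scalars). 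Comparing what each approach buys: yours is concrete, self-contained, and verifies the absence of spurious scalars on the nose; the route the paper cites (essentially your ``alternative'' induction using commutativity, associativity, the Frobenius law and speciality to fuse adjacent dots) is basis-independent and applies verbatim to any tensor satisfying those algebraic laws --- which matters for the paper, since the same node-equivalence is invoked for the \XOR{}-dot, a \COPY{}-dot in a rotated basis, and the abstract version extends to settings where no preferred copying basis has been singled out.
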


\begin{example}[Two-site reduced density operator of $n$-party {\sf GHZ}-states]
{\sf GHZ}-states on $n$-parties have a well known matrix product expression given as
\be
\ket{{\sf GHZ}_n} = \text{Tr}\left[ \left( \begin{array}{ccccc}
\ket{0} & 0 & \cdots & 0 \\
0 & \ket{1} & \cdots & 0 \\
\vdots & \cdots        & \ddots & \vdots \\
0 &  \cdots & 0 & \ket{d-1} 
\end{array} \right)^{n}\right]=
\ket{0}\ket{0}\ket{0}+\ket{1}\ket{1}\ket{1}+\cdots+\ket{d-1}\ket{d-1}\ket{d-1}
\ee
Such MPS networks are
known to be efficiently contactable.  We note that the networks in
Figure~\ref{fig:Spider} do not appear \textit{a priori} to be contractible due to the
number of open legs. What makes them contractible (in their present from) is that
the tensors obey node equivalence
law allowing them to be deformed into a contractible MPS network: see Figure~\ref{fig:Spider-comb}.  The reduced density matrix of an n-party
{\sf GHZ}-state then becomes (a) in
Figure~\ref{fig:ReducedGHZ} and the expectation value of an observable is shown in
(b) where we included the normalisation constant.
\end{example}

\begin{figure}[h]
\includegraphics[width=0.6\textwidth]{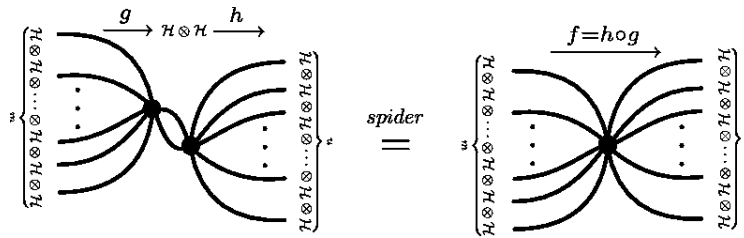}
\caption{Node equivalence or spider law \cite{FA}.  Connected black-dots ($\bullet$) as well
as
connected plus-dots
($\oplus$) can be merged and also split apart at will.  The intuition for digital or
qudit circuits follows by connecting a state $\ket{\phi}$ to one of the legs and
iterating over a complete basis $\ket{0}$, $\ket{1}$,...,$\ket{d-1}$.}
\label{fig:Spider}
\end{figure}

\begin{figure}[h]
\includegraphics[width=0.7\textwidth]{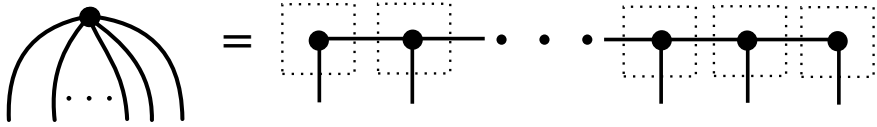}
\caption{The {\sf GHZ}-state tensor is simply a rank-$n$ {\sf COPY}-dot.  Node
equivalence
implies that this tensor can be deformed into any network
geometry including a MPS comb-like structure (right).}
\label{fig:Spider-comb}
\end{figure}

\begin{figure}[h]
\includegraphics[width=0.9\textwidth]{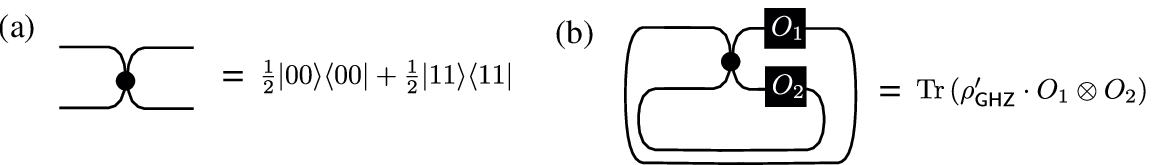}
\caption{Reduced density operator.  Left (a) reduced density operator $\rho_{\sf
GHZ}'$
found from applying the node equivalence law to a n-qubit {\sf GHZ}-state.  Right (b) the
expectation value of observable $O_1\otimes O_2$ found from connecting the
observable and connecting the open legs (i.e.\ taking the trace).}
\label{fig:ReducedGHZ}
\end{figure}

\subsection{Associativity, distributivity and commutativity}

The products we have considered are all associative and commutative.  As algebras,
\AND{}, \XOR{} and {\sf COPY}{} are associative, unital commutative algebras.  This
was
already expressed diagrammatically in Figures~\ref{fig:F2-presentation}(a) and
Figure \ref{fig:extraF2}(c).
The diagrammatic laws relevant for this subsection represent the following
Equations
\begin{equation}
(x_1\wedge x_2)\wedge x_3= x_1\wedge (x_2\wedge x_3)
\end{equation}
\begin{equation}
(x_1\oplus x_2)\oplus x_3= x_1\oplus (x_2\oplus x_3)
\end{equation}
Distributivity of \AND{} over \XOR{} then becomes (see (h) in
Figure~\ref{fig:extraF2})
\begin{equation}
(x_1\oplus x_2)\wedge x_3= (x_1\wedge x_2)\oplus (x_1\wedge x_2)
\end{equation}
We have commutativity for any product symmetric in its inputs: this is
the case for \AND{} and \XOR.

\subsection{Bialgebras on tensors}\label{sec:bialgebra}
There is a powerful type of algebra that arises in our
setting: a bialgebra defined graphically on tensors in Figure
\ref{fig:BialgebraAxioms} (see Kassel, Chapter III~\cite{Kassel}, \cite{FA}).  

 Such an algebra is simultaneously a unital associative algebra and co-algebra (for
the associativity condition see (b) in Figure~\ref{fig:BialgebraAxioms}).
Specifically, we consider the following two ingredients:
\begin{description}\addtolength{\itemsep}{-0.5\baselineskip}
\item[(i)] a product (black dot) with a unit (black triangle) see the right hand
side of Figure~\ref{fig:BialgebraAxioms}(a) 
\item[(ii)] a co-product (white dot) with a co-unit
(white triangle) see the left hand side of Figure~\ref{fig:BialgebraAxioms}(a) 
\end{description}
To form a bialgebra, these two ingredients above must be characterized by
the following four compatibility conditions:
\begin{description}\addtolength{\itemsep}{-0.5\baselineskip}
\item[(i)] The unit of the black dot is a copy-point of the white dot as in (e)
from Figure~\ref{fig:BialgebraAxioms}.
\item[(ii)] The (co)unit of the white dot is a copy-point of the black dot as in (d)
from Figure~\ref{fig:BialgebraAxioms}.
\item[(iii)] The bialgebra-law is satisfied given in (c) from
Figure~\ref{fig:BialgebraAxioms}.
\item[(iv)] The inner product of the unit (black triangle) and the co-unit (white
triangle) is non-zero (not shown in Figure~\ref{fig:BialgebraAxioms}).
\end{description}

\begin{figure}[h]
\includegraphics[width=1.0\textwidth]{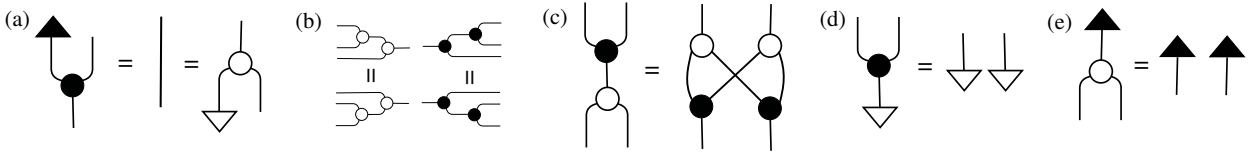}
\caption{Bialgebra axioms~\cite{FA} (scalars are omitted).  (a) unit laws (these are
of course
left and right units); (b) associativity; (c) bialgebra; (d,e) co-{\sf COPY}
points.}\label{fig:BialgebraAxioms}
\end{figure}

\begin{example}[{\sf GHZ}, \AND{} form a bialgebra]
We are in a position to study the interaction of {\sf GHZ}-\AND.   This interaction
satisfies the equations in Figure \ref{fig:BialgebraAxioms}: (a) the bialgebra law;
(b) the co-copy point of \AND{} is $\ket{1}$; and (c) the co-interaction with the
unit for {\sf GHZ} creates a compact structure.  In addition, (a) and (b) show the
copy points for the black {\sf GHZ}-dot; in (c) we have the unit and fixed point
laws.
\end{example}

Even if a given product and co-product do not satisfy all of the compatibility
conditions (given in (a), (b), (c), (d), (e) in Figure \ref{fig:BialgebraAxioms}),
and hence do not form bialgebras, they can still satisfy the bialgebra law which is
given in Figure \ref{fig:BialgebraAxioms}(c).  Examples of states
that satisfy the bialgebra law in Figure \ref{fig:BialgebraAxioms}(c), but are not
bialgebras are given in Definition~\ref{def:bialgebra}. Notice that bialgebra
provides a highly constraining characterization of the tensors involved and is
tantamount to defining a commutation relation between them.

\begin{definition}[Bialgebra Law~\cite{FA}]\label{def:bialgebra}
A pair of quantum states (black, white dots) satisfy the bialgebra law if (c)
in Figure \ref{fig:BialgebraAxioms} holds.  The Boolean states, \AND, \OR, \XOR,
\XNOR, \NAND, \NOR{} all satisfy the bialgebra law with {\sf COPY}.  
\end{definition}

\subsubsection{Hopf algebras on tensors}\label{sec:hopf}
A particularly important class of bialgebras are known as Hopf-algebras~\cite{FA}.
This is characterized by the way in which algebras and co-algebras can interact.  This
is captured by the Hopf-law, where the linear map $A$ is known as the antipode.

\begin{definition}[Hopf-Law~\cite{FA}]
A pair of quantum states satisfy the Hopf-Law if an $A$ can be found such that the
following equations hold:
\begin{center}
\includegraphics[width=.80\textwidth]{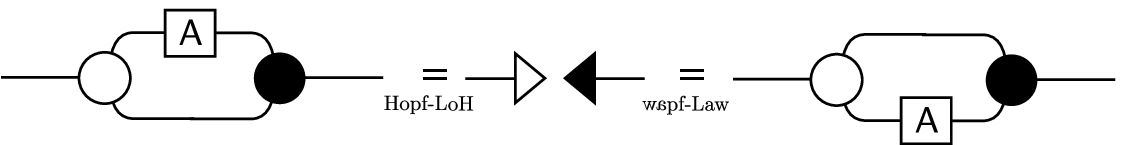}
\end{center}
\end{definition}

\begin{example}[\XOR{} and {\sf COPY} are Hopf-algebras on Boolean States
\cite{Boolean03}]
It is well known (see e.g.\ \cite{Boolean03}) that the Boolean state \XOR, satisfies
the Hopf-algebra law with trivial antipode ($A=\eye$) with {\sf COPY}.  Recall Figure
\ref{fig:F2-presentation}(g). 
\end{example}

\subsection{Bending wires: compact structures}\label{sec:bends}
As mentioned in the preliminary section (\ref{sec:overview}), we make use of
what's called a \textit{compact structure} in category theory which amounts to
introducing cups and caps, to provide a formal way to bend wires and define
transposition. See Figures
\ref{fig:cupsetc} and \ref{fig:adjoints}.  

A \textit{compact structure} on an object $\2H$ consists of another object $\2H^*$
together with a pair of morphisms (note that we use the equation $\2H^*=\2H$ in
Hilbert space making objects self dual which simplifies what follows).  The theory is 
well known in the modern mathematics of category theory and has been used to study teleportation in~\cite{catprotocols}. 
Consider 
\begin{equation*}
  \eta_{\2H} : \eye \longrightarrow {\2H} \otimes {\2H} ~~~~~~~~~~~~~~\epsilon_{\2H}
: {\2H} \otimes {\2H} \longrightarrow \eye
\end{equation*}
where the standard representation in Hilbert space with dimension $d$ and basis
$\{\ket{i}\}$ is given by
\begin{equation*}
\eta_{\2H} = \sum_{i=0}^{d-1}\ket{i}\otimes\ket{i} ~~~~~~~~~~~~~~ \epsilon_{\2H} =
\sum_{i=0}^{d-1}\bra{i}\otimes\bra{i}
\end{equation*}
and in string diagrams (read from the top to the bottom of the page) as
\begin{center}
  \includegraphics[width=0.35\textwidth]{D4.eps}
\end{center}
These cups and caps give rise to cup/cap-induced
duality: this amounts to being able to create a linear map that ``flips'' a bra to a
ket (and vise versa) and at the same time taking an (anti-linear) complex conjugate.
 In other words, the cap $\sum_{i=0}^1\bra{ii}$ sends  quantum state
$\ket{\psi}=\alpha\ket{0} + \beta\ket{1}$ to $\alpha\bra{0} + \beta\bra{1}$ which is
equal to the complex conjugate of $\ket{\psi}^\dagger =
\bra{\psi}=\overline{\alpha}\bra{0} + \overline{\beta}\bra{1}$.  Diagrammatically,
the dagger is given by mirroring operators across the page, whereas transposition is
given by bending wire(s). Clearly, $\bra{\overline{\psi}}=\alpha\bra{0} +
\beta\bra{1}$.  

In the case of relating the Bell-states and effects to the identity operator, under
cup/cap-induced duality, we flip the second ket on $\eta_{\2H}$ and the first bra on
$\epsilon_{\2H}$.  This relates these maps and the identity $\eye_{\2H}$ of the
Hilbert space: that is, we can fix a basis and construct
invertible maps sending $\eta_{\2H}~\cong~\eye_{\2H}~\cong~\epsilon_{\2H}$.  More
generally, the maps $\eta_{\2H}$ and $\epsilon_{\2H}$ satisfy the
equations given in Figure \ref{fig:cupsetc} and their duals
under the dagger. 

A second way to introduce cups and caps is to consider a \textit{Frobenius
form}~\cite{FA} on
either of the structures in the linear fragment from Figure \ref{fig:F2-presentation}
({\sf COPY} and \XOR).  This is simply a functional that turns a
product/co-product into a cup/cap.  This allows one to recover the above compact
structures (that is, the cups and caps given above) as
\begin{center}
  \includegraphics[width=0.5\textwidth]{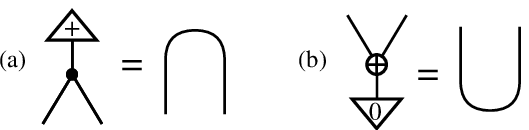}
\end{center}
Again, we will use these cups and caps as a formal way to bend wires in tensor
networks: this can be thought of simply as a reshape of a matrix.

\begin{figure}[h]
  \includegraphics[width=0.7\textwidth]{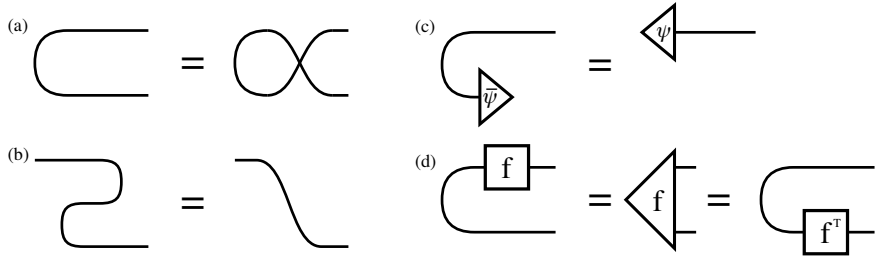}
\caption{Cup identities. (a) Symmetry. (b) Conjugate state.
(c) Teleportation \cite{catQM} or the \textit{snake equation}.
(d) Sliding an operator around a cup transposes it.}\label{fig:cupsetc}
\end{figure}

\begin{figure}[h]
  \includegraphics[width=0.7\textwidth]{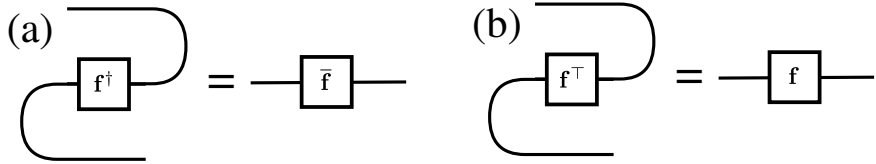}
\caption{Diagrammatic adjoints.  Cups and caps allow
us to take the transpose of a
linear map.  Note that care must be taken, as flipping
a ket $\ket{\psi}$ to a bra $\bra{\psi}$ is conjugate
transpose, and bending a wire is simply transposition, so the conjugate must be
taken: e.g.\ acting on $\ket{\psi}$ with a cap given as $\sum_i \bra{ii}$ results in
$\bra{\overline \psi}$.  }\label{fig:adjoints}
\end{figure}


\section{Examples of Categorical tensor network states}\label{sec:CTN}
Our categorical
approach enables one to translate
a quantum state directly into a new type of network: a so-called CTNS. We have
focused on Boolean
network components and have already presented in detail their algebraic properties and defining characteristics.
Here we will illustrate their expressive power by considering a few elementary examples before presenting our
main theorem \eqref{theorem:rep}, precisely showing how to determine a categorical
tensor network to represent any given quantum state.

\subsection{Constructing Boolean states}\label{sec:W}
Since the fixed building blocks of our tensor networks are the logic tensors \AND,
\OR,
\XOR\, and {\sf COPY}, along
with ancilla bits, we can immediately apply the universality of these elements for classical circuit construction
to guarantee that any Boolean state has a categorical tensor network decomposition.
However our construction goes beyond this because as we have seen,
categorical tensor networks can be deformed and rewired in ways which are not
ordinarily permitted in the standard acyclic-temporal definition of classical
circuits. The \W-state will be shown to provide a non-trivial example of this.

\begin{example}[Functions on \W- and {\sf GHZ}-states]\label{ex:WandGHZ}
We consider the function $f_\W$ which outputs logical-one given input bit string
$001$, $010$ and $100$ and logical-zero otherwise.  Likewise the function
$f_{{\sf GHZ}}$ is defined to output logical-one on input bit strings $000$ and $111$
and
logical-zero otherwise.  See Examples \ref{ex:WandGHZBoolean} and
\ref{ex:mlWandGHZ} which
consider representation of these functions as polynomials.  We will continue to work
with a linear representation of functions on quantum states; here
bit string
$000\mapsto \ket{000}$ (etc.).
\end{example}

\begin{example}[MPS form for \W-state]
Like the {\sf GHZ} state, the \W-state has a
simple MPS representation
\be\label{eqn:wnMPS}
\ket{\W_n} = \bra{0}\left( \begin{array}{cc}
\ket{0} & 0 \\
\ket{1} & \ket{0} \end{array} \right)^{n}\ket{1}=
\ket{10...0}+\ket{01...0}+...+\ket{00...1}.
\ee
This description \eqref{eqn:wnMPS} is succinct.   All MPS-states have essentially this 
same topological or network structure. In contrast, our
categorical construction
described below breaks this network up further.
\end{example}

\begin{remark}[Exact-value functions]
The function $f_\W$ takes value logical-one on input vectors with $k$ ones for a
fixed integer $k$.
Such functions are known in the literature as Exact-value symmetric Boolean
functions.  When cast into our framework, exact-value functions give
rise to tensor networks which represent what are known as Dicke states
\cite{2010NJPh...12g3025A}. 
\end{remark}

\begin{example}[Function realization of $f_\W$ and
$f_{{\sf GHZ}}$: the Boolean case]\label{ex:WandGHZBoolean}
One can express (using $\overline x$ to mean Boolean variable negation)
\begin{equation}
f_\W(x_1,x_2,x_3)=\overline x_1\overline x_2x_3\oplus
x_1\overline x_2\overline x_3\oplus \overline x_1x_2\overline x_3
\end{equation}
by noting that each term in the disjunctive normal form of $f_\W$ are disjoint, and
hence \OR~maps to \XOR~as $\vee\mapsto \oplus$.  The algebraic normal form (see
Appendix~\ref{sec:Boolean}) becomes
\begin{equation}\label{eqn:fWBoolean}
f_\W(x_1,x_2,x_3)=x_1 \oplus x_2\oplus x_3\oplus x_1 x_2 x_3
\end{equation}
\begin{equation}
f_{{\sf GHZ}}(x_1,x_2,x_3) =1\oplus x_1\oplus x_2\oplus x_3\oplus x_1 x_2\oplus
x_1x_3\oplus x_2x_3
\end{equation}
\end{example}

\begin{example}[Function realization of $f_\W$ and $f_{{\sf GHZ}}$: the set
function case] \label{ex:mlWandGHZ}
Set functions are mappings from the family of subsets of a finite ground set (e.g.\
Booleans) to the real or complex numbers.  In the circuit theory literature,
functions from the Booleans to the reals are known as pseudo-Boolean functions and
more commonly as
multi-linear polynomials or forms (see~\cite{JDB08} where these functions are used to
embed a co-algebraic theory of logic gates in the ground state energy configuration of
spin models). There exists an algebraic normal form and hence a unique multi-linear
polynomial representation for each pseudo-Boolean
function (see Appendix~\ref{sec:Boolean}).  This is found by mapping the negated
Boolean variable as
$\overline{x}\mapsto
(1-x)$.  For the {\sf GHZ}- and \W-functions defined in Example~\ref{ex:WandGHZ} we
arrive
at the unique polynomials \eqref{eqn:mlGHZ} and \eqref{eqn:mlW}.
\begin{equation}\label{eqn:mlGHZ}
f_{{\sf GHZ}}(x_1,x_2,x_3)=1 - x_1 - x_2 + x_1 x_2 - x_3 + x_1 x_3 + x_2 x_3
\end{equation}
\begin{equation}\label{eqn:mlW}
f_{\W}(x_1,x_2,x_3) = x_1 + x_2 + x_3 - 2 x_1 x_2 - 2 x_1 x_3 - 2 x_2 x_3 + 3 x_1
x_2 x_3
\end{equation}
These polynomials \eqref{eqn:mlGHZ} and \eqref{eqn:mlW} are readily translated into
categorical tensor networks.
\end{example}

\begin{example}[Network realisation of \W- and {\sf GHZ}-states]
A network realization of \W- and {\sf GHZ}-states in our framework then follows by
post-selecting the relevant network to $\ket{1}$ on the output bit --- leaving the
input qubits to
represent a \W- or {\sf GHZ}-state respectively. An example of this is shown in
Figure~\ref{fig:Wfunction}.
\end{example}

\begin{figure}[h]
\includegraphics[width=.90\textwidth]{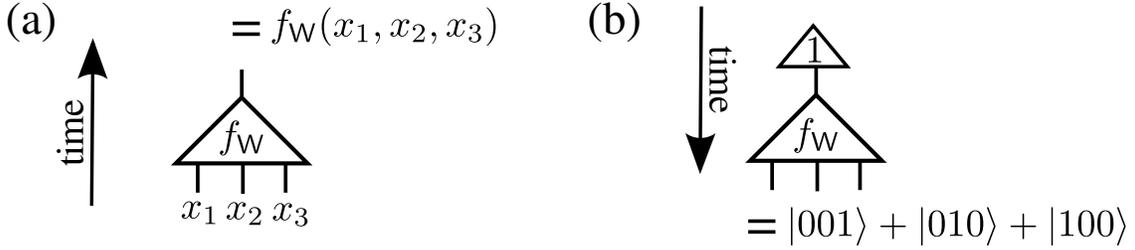}
\caption{Left (a) the circuit realization (internal to the triangle) of the function
$f_\W$ of e.g.\ \eqref{eqn:fWBoolean} which outputs logical-one given input $\ket{x_1
x_2 x_3}=$ $\ket{001}$, $\ket{010}$ and $\ket{100}$ and logical-zero
otherwise. Right (b) reversing time and setting the output to $\ket{1}$ (e.g.\
post-selection) gives a network representing the \W-state.  The na\"{i}ve
realization of $f_\W$ is given in Figure~\ref{fig:CategoricalWStates}
with an optimized co-algebraic construction shown in Figure
\ref{fig:CategoricalWStates}.}\label{fig:Wfunction}
\end{figure}

\begin{figure}[h]
\includegraphics[width=.50\textwidth]{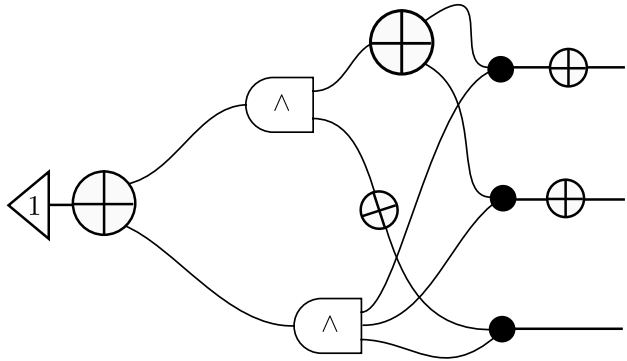}
\caption{Na\"{i}ve CTNS realization of the familiar \W-state
$\ket{001}+\ket{010}+\ket{100}$.  A standard (temporal) acyclic classical circuit
decomposition in terms of the \XOR-algebra realizes the function $f_\W$ of three
bits.  This function is given a representation on tensors.  As illustrated, the
networks input
is post selected to $\ket{1}$ to realize the desired
\W-state.}\label{fig:temporalWStates}
\end{figure}

\begin{figure}[h]
\includegraphics[width=.950\textwidth]{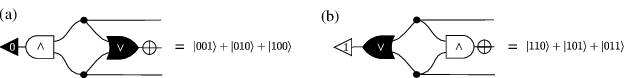}
\caption{\W-class states in the categorical tensor network state formalism.  (a)
is the standard \W-state.  (b) is found from applying De Morgan's law (see
Section~\ref{sec:AND}) to (a) and rearranging after inserting inverters on the output
legs. Notice the atemporal nature of the circuits, as
one gate is used forwards, and the other backwards.}\label{fig:CategoricalWStates}
\end{figure}

Two different categorical constructions for the building blocks of the \W-state are shown
in Figure~\ref{fig:temporalWStates} and  Figure~\ref{fig:CategoricalWStates}. Notice that in Figure~\ref{fig:CategoricalWStates} the
resulting tensor network forms an atemporal classical circuit and is much more efficient than
the na\"{i}ve construction in Figure~\ref{fig:temporalWStates}. Moreover by
appropriately daisy-chaining the networks in
Figure~\ref{fig:CategoricalWStates} we construct a categorical tensor network for
an $n$-party \W-state
as shown in Figure~\ref{fig:nPartyCategoricalWStates}. The resulting form of this tensor network is entirely equivalent
(up to regauging) to the MPS description given earlier, but now reveals internal structure of the state in terms of CTNS building blocks.

\begin{figure}[h]
\includegraphics[width=.950\textwidth]{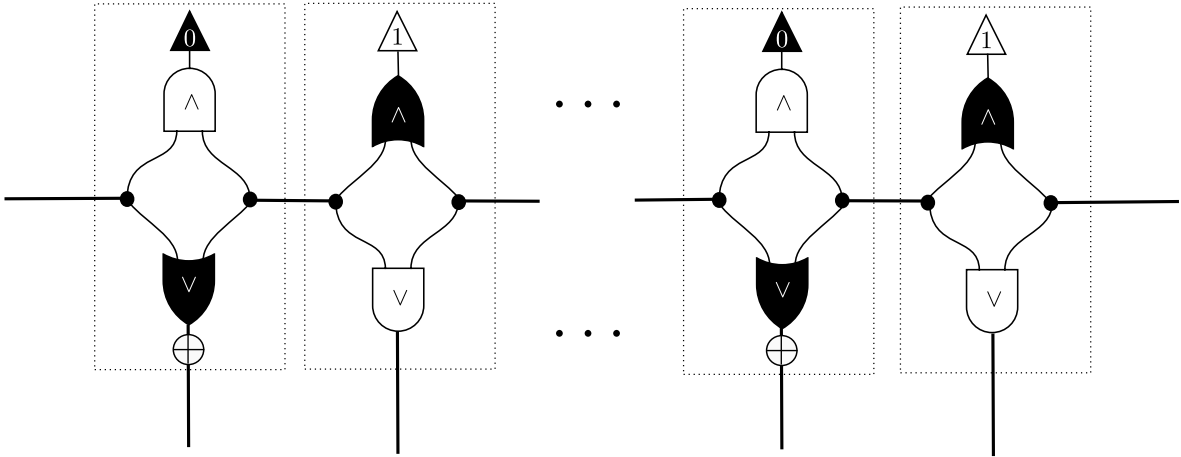}
\caption{\W-state ($n$-party) in the categorical tensor network state formalism.
The comb-like feature of efficient network contraction remains, with the
internal structure of the network components exposed in terms of well understood
algebraic structures.}\label{fig:nPartyCategoricalWStates}
\end{figure}

\subsection{Describing states with complex coefficients}
Boolean states, such as the {\sf GHZ}- and \W-states, are typified by being
superpositions of computational basis states with equal real coefficients (in both cases, these coefficients take only binary values, $0$ and $1$). 
In this section, we will permit a minor extension to binary superposition input/output states by considering arbitrary rank-1 tensors within our
otherwise Boolean tensor networks. This is
illustrated by a simple example:

\begin{example}[Network realization of
$\ket{\psi}=\ket{01}+\ket{10}+\alpha_k\ket{11}$]\label{ex:example1}
We will now design a network to realize the state
$\ket{01}+\ket{10}+\alpha_k\ket{11}$.  The first step is to write down a function
$f_S$
such that
\begin{equation}
 f_S(0,1) = f_S(1,0) = f_S(1,1) = 1
\end{equation}
and $f_S(00)=0$ (in the present case, $f_S$ is the logical \OR-gate). We post
select the network output on $\ket{1}$, which yields the state
$\ket{01}+\ket{10}+\ket{11}$, see Figure \ref{fig:example1}(a). The next step
is to realize a diagonal operator, that acts as identity on all inputs, except
$\ket{11}$ which gets sent to $\alpha_k\ket{11}$.  To do this, we design a function
$f_d$ such that
\begin{equation}
 f_d(0,1)=f_d(1,0)=f_d(0,0)=0
\end{equation}
and $f_d(1,1)=1$ (in the present case, $f_d$ is the logical \AND-gate).  This
diagonal, takes the form in Figure \ref{fig:example1}(b). The final state
$\ket{\psi}=\ket{01}+\ket{10}+\alpha_k\ket{11}$ is realized by connecting both
networks, leading
to Figure \ref{fig:example1}(c).
\end{example}

\begin{figure}[h]
\includegraphics[width=.950\textwidth]{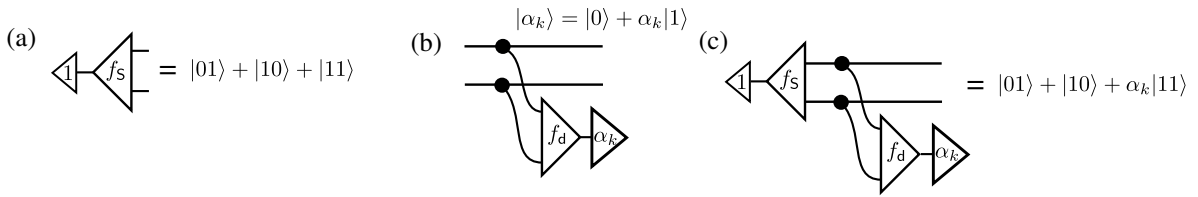}
\caption{Categorical tensor network representing
state $\ket{\psi}=\ket{01}+\ket{10}+\alpha_k\ket{11}$, as explained in Example \ref{ex:example1}.}\label{fig:example1}
\end{figure}

\section{Proof of The Main Theorems}\label{sec:mainresult}
We are now in a position to state the main theorem of this work. Specifically, we
have a constructive
method to realize any quantum state in terms of a categorical tensor
network. (A corollary of our exhaustive factorization of quantum states
into tensor networks is a new type of quantum network universality proof.  To avoid
confusion, we point out that past universality proofs in the gate model already
imply that the linear fragment (Figure~\ref{fig:F2-presentation}) together with local
gates is quantum universal.  However, the known universality results clearly do not
provide a method to factor a
state into a tensor network!  Indeed, the decomposition or factorization of a state
into a tensor network is an entirely different problem which we address here.) We
state and prove the theorem for the case of qubit. The higher dimensional case of
qudits follows from known results that
any $d$-state switching function can be expressed as a polynomial and realized as a
connected network \cite{MVL1, MVL2, MVL3}.  The theorem can be stated as

\begin{theorem}[Tensor network representation of quantum states]\label{theorem:rep}
For any state $\ket{\psi}$ of $n$-qubits with the form 
\begin{equation}\label{eqn:theorem:rep}
 \ket{\psi} = \sum_{j=1}^k \alpha_j\ket{\phi_j}, 
\end{equation}
where $\alpha_j$ are complex coefficients and for each $j$ the state $\ket{\phi_j}$
is an equal superposition
of a set of computational basis states, it can be represented as a network containing
tensors from the
quantum Boolean calculus (Figures \ref{fig:F2-presentation} and \ref{fig:extraF2}),
together with input/ouput states of the form
$\ket{\alpha_j}:=\ket{0}+\alpha_j\ket{1}$.
\end{theorem}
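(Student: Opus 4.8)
The plan is to prove the theorem constructively, reducing it to two ingredients already in hand: the classical universality of the Boolean gate set $\{{\sf COPY}, \AND, \XOR, \ket{1}\}$ for realizing switching functions, and the amplitude-injection trick illustrated in Example~\ref{ex:example1}. First I would pass to a canonical form. Expanding $\ket{\psi}=\sum_j\alpha_j\ket{\phi_j}$ in the computational basis gives $\ket{\psi}=\sum_x c_x\ket{x}$, and regrouping the basis states according to the \emph{distinct} values taken by $c_x$ lets me assume, without loss of generality, that the supporting sets $S_j=\{x:\ket{x}\text{ appears in }\ket{\phi_j}\}$ are pairwise disjoint and that the $\alpha_j$ are the distinct nonzero coefficients. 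Writing $S=\bigsqcup_j S_j$, each indicator function $f_j:\{0,1\}^n\to\{0,1\}$ (with $f_j(x)=1$ iff $x\in S_j$), together with the support indicator $f_S=\bigvee_j f_j$, is then a switching function.

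Next I would build the support state. Invoking classical universality to realize $f_S$ as a connected network of ${\sf COPY}$, $\AND$ and $\XOR$ tensors together with the constant $\ket{1}$, and post-selecting the output wire onto $\bra{1}$ exactly as in Equation~\eqref{eqn:Booleaninput}, yields the Boolean state $\ket{\Phi_S}=\sum_{x\in S}\ket{x}$ with all $n$ physical legs open. At this stage every surviving basis state carries amplitude $1$ and every off-support basis state has been annihilated, so no spurious terms can reappear later.

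The third and central step injects the complex amplitudes through diagonal operators. For each $j$ I would build a diagonal map $D_j$ by fanning out the input legs with ${\sf COPY}$-dots so that one copy of $\ket{x}$ passes straight through to the output while a second copy feeds a Boolean network computing $\ket{f_j(x)}$ on an ancilla wire; I then contract this ancilla with the effect $\bra{0}+\alpha_j\bra{1}$, which is precisely the transpose of the rank-1 tensor $\ket{\alpha_j}=\ket{0}+\alpha_j\ket{1}$ obtained by bending a wire (Section~\ref{sec:bends}). Because ${\sf COPY}$ acts as the diagonal on basis states and the construction is linear, $D_j$ multiplies $\ket{x}$ by $\alpha_j$ when $x\in S_j$ and by $1$ otherwise. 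Composing them gives $\ket{\psi}=\big(\prod_{j} D_j\big)\ket{\Phi_S}$: here disjointness of the $S_j$ is exactly what makes the argument work, since for $x\in S_{j_0}$ one factor equals $\alpha_{j_0}$ while all others equal $1$, so the coefficient collapses to the single value $\alpha_{j_0}$. Reading off amplitudes confirms that the network represents $\ket{\psi}$, and by construction every tensor is drawn from Figures~\ref{fig:F2-presentation} and \ref{fig:extraF2} or is an input/output state $\ket{\alpha_j}$.

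I expect the main obstacle to be organizational rather than conceptual, namely verifying that the separate amplitude-injection sub-circuits compose to the correct \emph{single} coefficient on each basis vector (the disjoint-support collapse) and that off-support terms are genuinely killed by $\ket{\Phi_S}$, so that no cross terms of the form $\alpha_i\alpha_j$ survive. Everything else follows from the cited classical universality results and the cup/cap formalism; and since exhaustiveness rather than efficiency is the goal, I would not attempt to control the size of the networks realizing the individual $f_j$.
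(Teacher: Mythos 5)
Your proposal is correct and takes essentially the same route as the paper's proof: a switching function realizing the union of the supports yields the uniform Boolean ``support'' state, and per-coefficient indicator functions $f_j$, post-selected onto $\ket{0}+\alpha_j\ket{1}$ and promoted via {\sf COPY}-dot fan-out to commuting diagonal operators (the paper's Lemma~\ref{lemma:h} and the composition in Figure~\ref{fig:fullCTNS}), inject the amplitudes. The only real difference is presentational --- you make explicit the regrouping of basis states into disjoint supports with distinct coefficients, which the paper leaves implicit in its remark that any state can be brought into the required form and that the construction ``automatically groups basis states with the same coefficients,'' and you post-select the support network on $\bra{1}$ where the paper equivalently inverts $f_0$ and selects on $\bra{0}$.
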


Notice that an arbitrary state can be brought into the form required of $\ket{\psi}$ by composing it as $k = 2^n$
terms with each state $\ket{\phi_j}$ being a single distinct computational basis
state. The proof is simplified by invoking some supporting lemmas.

\begin{lemma}
There exists
a map $g$ represented by a tensor network taking diagonal maps in
$\bigotimes_n\7C^2\rightarrow\bigotimes_n\7C^2$ onto quantum states in
$\bigotimes_n\7C^2$.
\end{lemma}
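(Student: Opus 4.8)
The plan is to exhibit $g$ explicitly and verify it on a basis of the space of diagonal operators. Writing an arbitrary diagonal map as $D=\sum_{x\in\{0,1\}^n} d_x\ket{x}\bra{x}$, the natural candidate for its image is the state $\ket{\psi_D}=\sum_x d_x\ket{x}$, since both the diagonal maps and the states of $\bigotimes_n\7C^2$ are parametrised by the same $2^n$ complex numbers $\{d_x\}$. I would therefore \emph{define} $g(D):=D\,\ket{+}^{\otimes n}$, where $\ket{+}=\ket{0}+\ket{1}$ is the (turned-around) unit of the \COPY-dot from Figure~\ref{fig:copygate}(c). A one-line computation on the computational basis then gives $g(D)=D\sum_x\ket{x}=\sum_x d_x\ket{x}=\ket{\psi_D}$, the required state, and $g$ is manifestly linear in $D$.

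First I would argue that this prescription is genuinely a tensor network assembled from the toolbox. The $n$ factors $\ket{+}$ are exactly the \COPY-units already present in our presentation, and feeding them into the input legs of $D$ is nothing but wire contraction. Equivalently, and in a way that makes the compact structure manifest, I would realise $g$ by bending the $n$ input legs of $D$ downward with cups $\eta_{\2H}=\sum_i\ket{ii}$ to obtain the Choi state $\sum_x d_x\ket{x}\ket{x}\in\bigotimes_{2n}\7C^2$, and then collapse the duplicated registers pairwise with the co-diagonal $\bigtriangledown$ of Section~\ref{sec:co-diagonal}. Because $\bigtriangledown(\ket{i},\ket{j})=\delta_{ij}\ket{i}$, each pair $\ket{x_k}\ket{x_k}$ merges to a single $\ket{x_k}$, returning $\sum_x d_x\ket{x}$ in $\bigotimes_n\7C^2$. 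The two routes agree since $\bigtriangledown\,\eta_{\2H}=\ket{+}$.

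The surjectivity suggested by the word \emph{onto} then follows immediately: given any state $\ket{\psi}=\sum_x c_x\ket{x}$, the diagonal map $D_\psi=\sum_x c_x\ket{x}\bra{x}$ satisfies $g(D_\psi)=\ket{\psi}$, so $g$ is in fact a bijection between diagonal operators and states. The one point needing care — and the crux of the lemma — is that the output lands in $\bigotimes_n\7C^2$ rather than in the $2n$-qubit space produced by naive map-state duality. This reduction back to $n$ qubits is exactly what diagonality buys us: for a general map $\sum_{x,y}D_{xy}\ket{x}\bra{y}$ the same cup-and-merge gadget would annihilate every off-diagonal entry, collapsing the Choi state to $\sum_x D_{xx}\ket{x}$ and discarding information. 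It is precisely because $D$ is diagonal that the co-\COPY merge is lossless, so that $g$ faithfully carries all of $D$ onto a state of the correct dimension. I expect this compatibility between diagonality and the merge to be the only substantive step; everything else is routine contraction in the calculus of Figure~\ref{fig:copygate}.
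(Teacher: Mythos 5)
Your proposal is correct and takes essentially the same route as the paper: the paper's proof also defines $g\{\mathcal{D}\} := \mathcal{D}\circ \bigotimes_n(\ket{0}+\ket{1})$ and verifies by the same one-line computation that this yields $\sum_{\mathbf{x}}\alpha_{\mathbf{x}}\ket{\mathbf{x}}$. Your additional remarks (the cup-and-merge realization via co-\COPY{} and the explicit surjectivity check) are consistent supplements but not needed beyond what the paper records.
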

 
\begin{proof}
 Let $\2D$ be a diagonal map in
$\bigotimes_n\7C^2\rightarrow\bigotimes_n\7C^2$. We write
$\2D = \sum_{\textbf x\in\{0,1\}^n} \alpha_{\textbf x} \ket{\textbf x}\bra{\textbf
x}$ and
proceed as follows (where the term $\2D\circ \bigotimes_n(\ket{0} + \ket{1})$
immediately yields the desirable tensor network depiction)

\begin{align*}
g\{\2D\} &:= \2D\circ \bigotimes_n(\ket{0} + \ket{1}) = \2D \circ
\sum_{\textbf{y}\in\{0,1\}^n} \ket{\textbf y} =
\sum_{\textbf{x},\textbf{y}\in\{0,1\}^n}\alpha_{\textbf x}\ket{\textbf
x}\braket{\textbf x}{\textbf y} = \\
&= \sum_{\textbf
x\in\{0,1\}^n}\delta_{\textbf{xy}}\alpha_{\textbf x}\ket{\textbf x}=\sum_{\textbf
x\in\{0,1\}^n}\alpha_{\textbf x}
\ket{\textbf x} 
\end{align*}

\end{proof}

\begin{lemma}\label{lemma:h}
There exists a map $h$ represented by a tensor network taking quantum
states in $\bigotimes_n\7C^2$ onto diagonal maps in
$\bigotimes_n\7C^2\rightarrow\bigotimes_n\7C^2$. 
\end{lemma}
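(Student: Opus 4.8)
The plan is to construct $h$ as essentially the inverse of the map $g$ from the previous lemma, using the {\sf COPY}-dot of Section~\ref{sec:COPY} together with the compact structure of Section~\ref{sec:bends} to bend wires. Given a state $\ket{\psi} = \sum_{\mathbf{x} \in \{0,1\}^n} \alpha_{\mathbf{x}} \ket{\mathbf{x}}$, the key observation is that the {\sf COPY}-dot $\bigtriangleup = \sum_i \ket{ii}\bra{i}$ duplicates each computational basis leg, so applying it to every physical leg doubles the number of open legs while transporting the coefficients onto a diagonal structure.

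First I would apply the {\sf COPY}-dot $\bigtriangleup$ to each of the $n$ legs of $\ket{\psi}$. Since $\bigtriangleup\ket{x} = \ket{xx}$ for every basis state, and reordering legs is merely a permutation, this produces the $2n$-leg object
\begin{equation*}
\bigtriangleup^{\otimes n}\ket{\psi} = \sum_{\mathbf{x}\in\{0,1\}^n} \alpha_{\mathbf{x}} \ket{\mathbf{x}}\ket{\mathbf{x}},
\end{equation*}
which lives in $\bigotimes_{2n}\7C^2$ and is manifestly a tensor network built from {\sf COPY}-dots. Next I would invoke the compact structure to bend the second $n$-qubit register downwards, converting its $n$ output legs into input legs. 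Under cup/cap-induced duality this turns each $\ket{\mathbf{x}}$ in the second register into $\bra{\mathbf{x}}$, yielding
\begin{equation*}
h\{\ket{\psi}\} := \sum_{\mathbf{x}\in\{0,1\}^n} \alpha_{\mathbf{x}} \ket{\mathbf{x}}\bra{\mathbf{x}} = \2D,
\end{equation*}
which is a diagonal map in $\bigotimes_n\7C^2 \rightarrow \bigotimes_n\7C^2$ represented by a tensor network. This $\2D$ is precisely the operator whose image under $g$ recovers $\ket{\psi}$, so $h$ inverts $g$ and the two lemmas together establish the map--state correspondence needed for Theorem~\ref{theorem:rep}.

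The main subtlety to handle carefully is the distinction between bending a wire (transposition) and applying the dagger (conjugate transpose): as emphasized in the discussion of Figure~\ref{fig:adjoints}, bending a ket to a bra is transposition and therefore does \emph{not} conjugate the coefficients $\alpha_{\mathbf{x}}$. Since the computational basis states fixed in Section~\ref{sec:components} are real, no spurious complex conjugation of the basis labels arises either, so the coefficients attached to the first register are transported faithfully onto the diagonal. This is the only point where the argument could go astray, and it is resolved by the real-basis convention already in force.
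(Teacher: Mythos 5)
Your proposal is correct and follows essentially the same route as the paper's own proof: the paper likewise first applies a {\sf COPY}-dot $\sum_{i}\ket{ii}\bra{i}$ to each of the $n$ legs (its intermediate map $h'$) to obtain $\sum_{\mathbf{x}}\alpha_{\mathbf{x}}\ket{\mathbf{x}}\ket{\mathbf{x}}$, and then composes with caps $\bigotimes_n\bigl(\sum_i \bra{ii}\bigr)$ to bend the duplicated register into input legs, yielding the diagonal $\sum_{\mathbf{x}}\alpha_{\mathbf{x}}\ket{\mathbf{x}}\bra{\mathbf{x}}$. Your closing remark that wire-bending is transposition rather than the dagger, so the $\alpha_{\mathbf{x}}$ are not conjugated in the fixed real computational basis, is exactly the convention the paper invokes in Section~\ref{sec:bends}.
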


\begin{proof}
Let $\ket{\psi}$ be a quantum state in
$\bigotimes_n\7C^2$.   Let $\2D$ be a diagonal map in
$\bigotimes_n\7C^2\rightarrow\bigotimes_n\7C^2$. We write
$\2D = \sum_{\textbf x\in\{0,1\}^n} \alpha_{\textbf x} \ket{\textbf x}\bra{\textbf
x}$ and
proceed as follows (where the term $\bigotimes_n (\sum_{i=0/1}\ket{ii}\bra{i})\circ
\sum_{\textbf x\in\{0,1\}^n} \alpha_{\textbf x} \ket{\textbf x}$
immediately yields the desirable tensor network depiction)
\begin{align*}
h'\{\ket{\psi}\} &:= \bigotimes_n (\sum_{i=0/1}\ket{ii}\bra{i})\circ \sum_{\textbf
x\in\{0,1\}^n} \alpha_{\textbf x}
\ket{\textbf x} = \sum_{\textbf y\in\{0,1\}^n}
\ket{\textbf{yy}}\bra{\textbf y} \circ \sum_{\textbf
x\in\{0,1\}^n} \alpha_{\textbf x} \ket{\textbf x} =\\
&= \sum_{\textbf{x},\textbf{y}\in\{0,1\}^n}\ket{\textbf{yy}}\alpha_{\textbf
x}\delta_{\textbf{xy}} = \sum_{\textbf
x\in\{0,1\}^n}\alpha_x\ket{\textbf{xx}} 
\end{align*}

and then we now write $h$ in terms of $h'$ 
\begin{align*}
h\{\ket{\psi}\} &:= \bigotimes_n(\sum_{i=0/1}\bra{ii})\circ h'\{\2D\} =
\sum_{\textbf y\in\{0,1\}^n}\bra{\textbf{yy}}\circ \sum_{\textbf
x\in\{0,1\}^n}\alpha_{\textbf x}\ket{\textbf{xx}} =\\
&= \sum_{\textbf{y, x}\in\{0,1\}^n}\alpha_{\textbf x} \ket{\textbf
x}\bra{\textbf y}\delta_{\textbf{yx}} = \sum_{\textbf x\in\{0,1\}^n}\alpha_{\textbf
x}\ket{\textbf x}\bra{\textbf x} 
\end{align*}

\end{proof}

\begin{corollary}
 It follows that $g\{h\{\ket{\psi}\}\} = \eye_\psi\circ\ket{\psi}=\ket{\psi}$ and
$h\{g\{\2D\}\} =
\eye_{\2D}\circ \2D= \2D$ and hence we have inverses for $g$ and $h$ establishing an
isomorphism between diagonal operators in 
$\bigotimes_n\7C^2\rightarrow\bigotimes_n\7C^2$ and quantum states in
$\bigotimes_n\7C^2$.  
\end{corollary}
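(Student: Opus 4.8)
The plan is to leverage the explicit coefficient-level descriptions of $g$ and $h$ already extracted in the two preceding lemmas and simply verify that the two composites return their arguments unchanged. Both maps are realized by contraction against fixed tensors (the {\sf COPY}-dots, together with the cup/cap units $\sum_{i}\ket{ii}\bra{i}$ and $\sum_i\bra{ii}$), hence are linear; so it suffices to track their action on a general element and confirm that each round trip is the identity.

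First I would take an arbitrary state $\ket{\psi}=\sum_{\textbf x\in\{0,1\}^n}\alpha_{\textbf x}\ket{\textbf x}$ and apply $h$. By Lemma~\ref{lemma:h} this yields the diagonal operator $h\{\ket{\psi}\}=\sum_{\textbf x}\alpha_{\textbf x}\ket{\textbf x}\bra{\textbf x}$, whose diagonal entries are exactly the amplitudes of $\ket{\psi}$. Feeding this into $g$ --- whose action, from the lemma defining $g$, sends $\sum_{\textbf x}\alpha_{\textbf x}\ket{\textbf x}\bra{\textbf x}$ to $\sum_{\textbf x}\alpha_{\textbf x}\ket{\textbf x}$ via $\2D\circ\bigotimes_n(\ket{0}+\ket{1})$ --- returns $\ket{\psi}$ verbatim, giving $g\{h\{\ket{\psi}\}\}=\ket{\psi}$. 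For the reverse composite I would begin with an arbitrary diagonal operator $\2D=\sum_{\textbf x}\alpha_{\textbf x}\ket{\textbf x}\bra{\textbf x}$, apply $g$ to obtain $\sum_{\textbf x}\alpha_{\textbf x}\ket{\textbf x}$, and then apply $h$ to recover $\sum_{\textbf x}\alpha_{\textbf x}\ket{\textbf x}\bra{\textbf x}=\2D$, so that $h\{g\{\2D\}\}=\2D$.

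Since each composite acts as the identity on its domain (formally $\eye_\psi$ on states and $\eye_{\2D}$ on diagonal operators), $g$ and $h$ are mutually inverse, and being mutually inverse linear maps they constitute a linear isomorphism between the space of diagonal operators on $\bigotimes_n\7C^2\rightarrow\bigotimes_n\7C^2$ and the space of states in $\bigotimes_n\7C^2$. Conceptually this is just the restriction of map-state duality to the diagonal sector, implemented graphically by the {\sf COPY}-dots that \emph{reshape} a single state index into a matched pair of operator indices, and fold it back.

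There is no genuine obstacle here: the entire argument is bookkeeping, following the one coefficient vector $(\alpha_{\textbf x})$ as it is carried between a column vector (the state) and a diagonal matrix (the operator) and back. If any point deserves care, it is confirming that $h$ lands inside the diagonal operators rather than among general operators --- which is precisely where the Kronecker $\delta_{\textbf{xy}}$ identities produced by the {\sf COPY}-dot contractions in the lemma proofs force all off-diagonal terms to vanish.
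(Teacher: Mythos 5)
Your proposal is correct and matches the paper's treatment: the paper offers no separate proof for this corollary, presenting it as an immediate consequence of the explicit coefficient formulas $g\{\2D\}=\sum_{\textbf x}\alpha_{\textbf x}\ket{\textbf x}$ and $h\{\ket{\psi}\}=\sum_{\textbf x}\alpha_{\textbf x}\ket{\textbf x}\bra{\textbf x}$ established in the two lemmas, which is exactly the round-trip bookkeeping you carry out. Your closing remark that the Kronecker deltas from the {\sf COPY}-dot contractions force $h$ to land in the diagonal sector is a sound observation already implicit in the lemma computations.
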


With the supporting lemmas in place, we will now proceed to prove
Theorem~\ref{theorem:rep}. 
\begin{proof}
Returning to our particular expression for an arbitrary quantum state (note that
in Equation \eqref{eqn:ptr} we now append an extra term, by letting the sum run from
$j=0$
instead of $j=1$, the use of this will become clear below)
\begin{equation}\label{eqn:ptr}
\ket{\psi} = \sum_{j=0}^k \alpha_j\ket{\phi_j},
\end{equation}
where $\alpha_j$ are complex coefficients and for each $j$ the
state $\ket{\phi_j}$ is an equal superposition of a set of computational basis
states, we will explain how $k+1$ asynchronous circuits~\cite{Weg87} are
used to factor the state, and express the state as a CTNS (here and in what
follows $k$ is the highest term in the sum from Equation \eqref{eqn:theorem:rep}).   

We proceed by returning to our original expression \eqref{eqn:theorem:rep} from
Theorem \ref{theorem:rep} (starting from $j=1$) with the coefficients removed
\begin{equation}\label{eqn:coe}
\ket{\psi'}=\sum_{j=1}^k \ket{\phi_j}
\end{equation}
Each individual term in Equation \eqref{eqn:coe} is then expressed in the
computational basis and used to form a set denoted $\2L^+$. All corresponding bit
patterns of the same dimension not appearing in this expression form a second set
$\2L^-$ (where clearly $\2L^+\cap\2L^-=\emptyset$ and $\2L^+\cup\2L^-=\{0,1\}^n$,
where $n$ is the number of qubits in the desired state).  We proceed to construct
a function $f_0$ that outputs logical-one on all 
input bit strings in $\2L^+$ and outputs local zero on all input bit strings in
$\2L^-$.  The function acts on $n+1$ bits, the inputs are given on the right of the
tensor symbol and the output on the left of the tensor symbol in \eqref{eqn:func} 
\begin{equation}\label{eqn:func}
\ket{\psi_f} = \sum_{\textbf x\in\{0,1\}^n} \ket{\textbf x}\otimes\ket{f(\textbf x)}
\end{equation} 
where $f(\textbf{x}):\7B^n\rightarrow \7B::\textbf{x}\mapsto f(\textbf{x})$ was given
the representation on
quantum states in Section \ref{sec:CTN}.  Post selecting the networks output (the
rightmost bit in Equation \eqref{eqn:func}) $\ket{1}$ realizes the desired
superposition of terms, with all coefficients of the terms and hence relative phases
equal. 
\be
\ket{\psi_f'} = \sum_{\textbf x\in\{0,1\}^n} \braket{1}{f(\textbf x)}\ket{\textbf x}
\ee

For our specific construction, depicted in Figure~\ref{fig:fullCTNS}, we proceed by
inverting the output of the function (e.g.\ $f_0 \mapsto f_0\oplus 1$).  We then post
select the output of the function to the state $\ket{\alpha_0}=\ket{0} +
\alpha_0\ket{1}=\ket{0}$ for the choice $\alpha_0=0$.  This handles the $j=0$ term
in the sum Equation \eqref{eqn:ptr}.  

To adjust the amplitudes of the desired state from Equation \eqref{eqn:ptr}, we will
construct tensors that represent diagonal operators. For the $j$th term in
$\ket{\psi}$ with coefficient $\alpha_j$,
we again construct a function $f_j$.  We represent $\ket{\phi_j}$ in the
computational basis, and each term in this expression is used to form a set denoted
$\2L^+$. All corresponding bit patterns of the same dimension not appearing in this
expression form a second set $\2L^-$.  We proceed to construct $f_j$ to output
logical-one on all input bit strings in $\2L^+$ and outputs logical-zero on all input
bit strings in $\2L^-$.  The network is then post selected to $\ket{0} +
\alpha_j\ket{1}$ which results in states of the form 
\be 
\ket{\psi_D} =  \sum_{\textbf x\in\{0,1\}^n} \braket{0}{f(\textbf x)}\ket{\textbf x}
+ \alpha_j \sum_{\textbf x\in\{0,1\}^n}
\braket{1}{f(\textbf x)}\ket{\textbf x}
\ee
and we transform $f_j$ into a diagonal operator having entries $\in\{1,\alpha_j\}$ by
applying the map $h$ from Lemma
\ref{lemma:h} resulting in the diagonal map 
\be 
\2D_j =  \sum_{\textbf x\in\{0,1\}^n} \braket{0}{f(\textbf x)}\ket{\textbf
x}\bra{\textbf x} + \alpha_j \sum_{\textbf x\in\{0,1\}^n}
\braket{1}{f(\textbf x)}\ket{\textbf x}\bra{\textbf x}
\ee 
We will apply $k$ such commuting maps $\2D_j$ to the initial state, accounting for
$k$
asynchronous circuits.  The operators are composed by means of $n$
co-{\sf COPY}-dots from Section~\ref{sec:co-diagonal} (see Figure \ref{fig:example1}
and Example \ref{ex:example1}). There will be a single output with open legs which
gives the state.  Each of the $n$ {\sf COPY}-dots will then require $k+2$ legs.  The
resulting construction then gives tensor networks with the form shown in
Figure~\ref{fig:fullCTNS}.
\end{proof}

\begin{figure}[h]
\includegraphics[width=.750\textwidth]{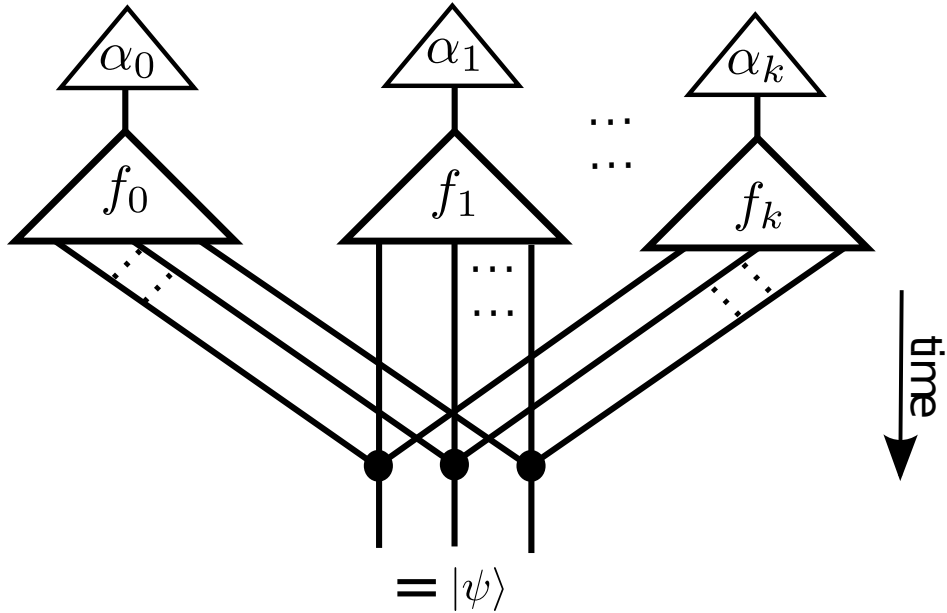}
\caption{The CTNS for a state $\ket{\psi}$ resulting from our exhaustive construction procedure.}\label{fig:fullCTNS}
\end{figure}

\begin{remark}[Qudit states]\label{re:nonqubits} In Theorem
\ref{theorem:rep} we considered the arbitrary states of $n$-qubits. By using
multi-valued logic (also called $d$-state switching, or many-valued logic),
it is possible to define a universal gate set similar to what was done for the case of qubits and
so equivalently construct a CTNS for $n$-body qudit systems \cite{MVL1, MVL2, MVL3}.
\end{remark}

\begin{definition}[Generalized Polynomial Boolean States
(GPBS)]\label{def:GPBS}
 The size of a tensor network is the number of tensors it contains and its depth is
the maximal length of a path from any tensor to any other. Consider families
of uniform circuits, built from the {\sf XOR}-algebra, that is the bounded fan-in
gates {\sf AND}, {\sf XOR}, {\sf COPY}, of arity two and the constant $\ket{1}$.  We
will index
these circuit families by bounding the circuit depth, which also has the impact of
bounding the maximum fan-in and the circuit size.  We will then consider categorical
tensor networks to represent states $\ket{\psi}$ of $n$ interacting $d$-level
systems constructed from bounded circuit families to realize each $f_j$ with the form
given in Figure \ref{fig:fullCTNS}.  We will proceed by indexing these families of
categorical tensor networks in terms of
$k$, the maximum depth of any given circuit realizing any function $f_j$ in the
network. We will then bound the number of such functions $f_j$ to be
at most some polynomial in $k$. We then determine how $k(n)$ changes.  This works by
considering circuit families and categorical tensor networks of a given form, used to
represent quantum states on increasingly many subsystems $n$.  If $k(n)$ is bounded
by a polynomial in $n$ the categorical tensor network has an efficient description. 
We index such families as {\sf C}($k$), and refer to them as Generalized Polynomial
Boolean States (GPBS). 
\end{definition}

\begin{theorem}\label{corollary:states}
{\upshape GPBS} from Definition \ref{def:GPBS} are sampled exactly in the
computational basis in time and space complexity bounded by {\upshape poly($k$)}.
\end{theorem}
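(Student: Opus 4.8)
The plan is to read ``sampling'' in the sense used for string-bond states earlier in the paper, namely: for any fixed computational basis state $\ket{\textbf{x}}$, with $\textbf{x}\in\{0,1\}^n$, extract the amplitude $\braket{\textbf{x}}{\psi}$ exactly. I would show that this reduces to a small number of independent classical circuit evaluations. First I would recall the precise shape of a \textbf{GPBS} from Definition~\ref{def:GPBS} together with the construction behind Theorem~\ref{theorem:rep} (Figure~\ref{fig:fullCTNS}): the network is assembled from $k+1$ constituent circuits realizing the switching functions $f_0,\dots,f_k$, each of depth at most $k$ with bounded fan-in and hence of size polynomial in $k$; the commuting diagonal operators $\2D_j$ obtained from the $f_j$ via the map $h$ of Lemma~\ref{lemma:h}; a layer of $n$ co-\COPY{} dots (one per physical leg) composing these diagonal operators; and the rank-1 post-selection states $\ket{\alpha_j}=\ket{0}+\alpha_j\ket{1}$.

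The core of the argument is what happens when the $n$ open physical legs are projected onto a fixed basis state $\bra{\textbf{x}}$. By the copy-point property of the \COPY-dot and its dual co-\COPY{} (Figure~\ref{fig:copygate}(b) and Section~\ref{sec:co-diagonal}), a dot fed a computational basis value copies that value along all of its legs and \emph{disconnects} into independent wires; this is exactly the ``glue'' mechanism that makes string-bond states samplable (Figure~\ref{fig:stringbonds}(a)). Clamping the physical legs to $\textbf{x}$ therefore drives the definite value $\textbf{x}$ through the co-\COPY{} layer into every branch, and the tensor network falls apart into $k+1$ disconnected pieces, the $j$-th depending only on $f_j$. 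Because each toolbox tensor (\COPY, \XOR, \AND) maps computational basis states to basis states, once its input is the definite string $\textbf{x}$ the $j$-th branch merely performs the classical evaluation of $f_j(\textbf{x})$, and the post-selection onto $\ket{\alpha_j}$ contributes the scalar $\braket{0}{f_j(\textbf{x})}+\alpha_j\braket{1}{f_j(\textbf{x})}\in\{1,\alpha_j\}$, while the $j=0$ branch contributes the support indicator in $\{0,1\}$.

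Assembling the branch contributions, the amplitude factorizes as a product of $k+1$ factors, each computable by evaluating a single depth-$k$ classical circuit, $\braket{\textbf{x}}{\psi} = s_0(\textbf{x})\prod_{j=1}^{k}\big(\braket{0}{f_j(\textbf{x})}+\alpha_j\braket{1}{f_j(\textbf{x})}\big)$, where $s_0(\textbf{x})\in\{0,1\}$ is the support indicator produced by $f_0$. Evaluating one such circuit costs time and space polynomial in $k$, and by Definition~\ref{def:GPBS} there are at most $\mathrm{poly}(k)$ factors; processing the circuits one at a time while accumulating the running product reuses space, so the total time and space are bounded by $\mathrm{poly}(k)$, as claimed. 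The qudit case follows on replacing each Boolean $f_j$ by a $d$-valued switching function, using the universality of multi-valued logic invoked in Remark~\ref{re:nonqubits}.

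I expect the main obstacle to be making the second step fully rigorous: one must verify that projecting \emph{only} the open output legs genuinely disconnects the network into independent classical evaluations, even though several constituent gates are used \emph{atemporally} (i.e.\ backwards), so the fixed output value has to back-propagate consistently to a well-defined input for each $f_j$. This is governed by the delta-function action of co-\COPY{} (Section~\ref{sec:co-diagonal}) together with map--state duality, but carefully arguing that every internal wire acquires a definite basis value --- rather than lingering in a superposition that would recouple the branches --- is the part demanding the most care.
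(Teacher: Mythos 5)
Your proposal follows the paper's proof almost step for step: clamp the open legs to a fixed basis string $\textbf{x}$, let the \COPY{} dots duplicate $\textbf{x}$ into poly($k$) disconnected branches, evaluate each depth-poly($k$) bounded-fan-in circuit $f_j$ classically, and combine the resulting scalars $\braket{f_j(\textbf{x})}{\alpha_j}\in\{1,\alpha_j\}$, for total time and space poly($k$). The one substantive divergence is how the branch scalars are combined: the paper's proof displays the amplitude as the \emph{sum} $\sum_{j}\braket{f_j(\textbf{x})}{\alpha_j}$, whereas you write the \emph{product} $s_0(\textbf{x})\prod_{j=1}^{k}\bigl(\braket{0}{f_j(\textbf{x})}+\alpha_j\braket{1}{f_j(\textbf{x})}\bigr)$. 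Your version is the correct one: disconnected closed tensor-network components combine multiplicatively, and the construction of Theorem \ref{theorem:rep} builds the state by applying $k$ commuting diagonal operators $\2D_j$, each with diagonal entries in $\{1,\alpha_j\}$, to the support state cut out by $f_0$, so the amplitude at $\textbf{x}$ is $[f_0(\textbf{x})=1]\prod_{j=1}^{k}\bigl(\braket{0}{f_j(\textbf{x})}+\alpha_j\braket{1}{f_j(\textbf{x})}\bigr)$; a sum would, for instance, assign amplitude $\alpha_1+(k-1)$ rather than $\alpha_1$ to a basis state on which only $f_1$ fires. The complexity conclusion is unaffected either way, since both reduce to poly($k$) independent circuit evaluations accumulated one at a time. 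Finally, the back-propagation worry you flag at the end dissolves for the class actually at issue: Definition \ref{def:GPBS} ties GPBS to the normal form of Figure \ref{fig:fullCTNS}, in which each $f_j$ is realized by an acyclic-temporal circuit evaluated forward on the copied basis string, so every internal wire carries a definite basis value; atemporal rewiring only arises in examples like the \W-state networks, not in this construction.
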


\begin{proof}
To prove Theorem \ref{corollary:states} we begin first by considering a 
qudit state vector $\ket{x_0,x_1,...,x_n}$ for specific
$x_0,x_1,...,x_n\in\{0,1,...,d-1\}$.  We wish to know the coefficient
$\braket{x_0,x_1,...,x_n}{{\sf C}}$, where {\sf C} is a CTNS representing a GPBS. 
The {\sf COPY}-gates in the construction from Figure \ref{fig:fullCTNS} map 
\be
\ket{x_0,x_1,...,x_n}\mapsto \bigotimes_{\text{poly}(k)} \ket{x_0,x_1,...,x_n}
\ee 
each of these poly($k$) vectors will be acted on by a network realizing $f_j$ post
selected to the state $\ket{\alpha_j}$.  Hence, to sample the network {\sf C}
amounts
evaluating the sum 
\be
\braket{x_0,x_1,...,x_n}{{\sf C}}=\sum_{j\in \text{poly}(k)}
\braket{f_j(x_0,x_1,...,x_n)}{\alpha_j} = c\in \7C
\ee
The proof then follows by simply evaluating each of the poly($k$) poly($k$)-depth 
functions $f_j(x_0,x_1,...,x_n)$ and then summing the inner products
$\braket{f_j(x_0,x_1,...,x_n)}{\alpha_j}$. 
\end{proof}

We note that the construction in Theorem~\ref{theorem:rep} automatically groups basis
states with the same coefficients
$\alpha_j$, of the $k$ terms.  Further reductions are also possible if say a
given set of
coefficients are given by products of other coefficients. While this construction
does prove the existence of a CTNS (along with how to build it) our
construction will not render efficient representations for
general cases, as one might expect. Indeed, there is no guarantee that any of the
$k+1$ switching functions are efficient in their complexity, nor that the
resulting
complete network is contractible. The latter property is in fact confounded by the
presence of fan-in (up to $k+2$ legs) of the $n$ co-{\sf COPY}-dots (the presence
also
implies that the network
cannot represent a deterministic physical preparation procedure~\cite{BB11}). However,
as we saw earlier with string-bond states in Figure~\ref{fig:stringbonds}, the {\sf
COPY}~dot breaks up 
when computational basis states are inputted. For our general decomposition in in Figure~\ref{fig:fullCTNS},
this case causes the $k+1$ Boolean switching functions to similarly decouple.
Intuitively, if we
further restrict ourselves
to $k+1$ being polynomial in $n$, and additionally that each switching function has a depth which is also polynomial
in $n$, then the subsequent evaluation of the amplitude of the state is efficient for
any computational basis state (see Definition
\ref{def:GPBS} and Theorem \ref{corollary:states}).  This is a weak requirement in
practice and interestingly, does not depend on the internal geometry of the networks
representing the functions, but only on their depth and size.  Thus we 
have found a new general class of states which can be sampled exactly and
efficiently. Finally
the construction was based on using acyclic-temporal
Boolean circuits. However, we have already seen that in the tensor context
wires can be bent around: it is not necessary for a tensor network to correspond
to a valid classical circuit. As the \W-state example (see
Figures~\ref{fig:CategoricalWStates} and \ref{fig:nPartyCategoricalWStates})
illustrated the tensor networks
can be much simpler once this new freedom is exploited.

\section{Outlook and Concluding Remarks}\label{sec:conclusion}

We have introduced methods from category theory and algebra to tensor network states.  Our main
focus has been on tensor networks built from contracting Boolean tensors and we
have outlined their algebraic properties. The logical conclusion of
this approach has led us to a exhaustive CTNS decomposition of
quantum states. From this we obtained a class of quantum states, which
can
be sampled in the computational basis efficiently and exactly.  The expressiveness 
and power of this new method was further illustrated by
considering several simple test cases: we considered internal structure of some
MPS states, e.g.\ {\sf GHZ}- and \W-states. We have opened up
some future potential research directions.  In particular, beyond the form of our
construction there is an open question as to whether the algebraic properties 
of some subset of the tensors in CTNS can enable efficiently contractible
networks beyond those already known which are based on topology (like MPS) and
additional
unitary/isometric constraints (MERA). In this way future studies of CTNS may lead to new
classes of states and algorithms which will help challenge and shape our
understanding of many-body physics.  


\begin{acknowledgments}
We thank John Baez, Tomi Johnson, Martin Plenio, Vlatko
Vedral, Mike Shulman, Ville Bergholm, Samson Abramsky, Bob Coecke, Chris Heunen and
Guifr\'{e} Vidal. JDB
received support from the EPSRC and completed large parts of
this work visiting the Centre for Quantum Technologies, at
the National University of Singapore (these visits were hosted by Vlatko
Vedral).  SRC and DJ thank the National Research Foundation and
the Ministry of Education of Singapore for support. DJ acknowledges
support from the ESF program EuroQUAM (EPSRC
grant EP/E041612/1), the EPSRC (UK) through the
QIP IRC (GR/S82176/01), and the European Commission under
the Marie Curie programme through QIPEST.  Earlier versions of this work circulated
for just over one year prior to being uploaded to the arXiv, the most widespread
appearing under the title \textit{Algebra and co-algebra on Categorical Tensor Network
States}.   
\end{acknowledgments}

\appendix

\section{Algebra on quantum states}\label{sec:newalgebra}
We are concerned with a network theory of quantum states.  This on the one hand can
be used as a tool to solve problems about states and operators in quantum theory, but
does have a physical interpretation on the other.  This is not foundational
\textit{per se} but instead largely based on what one 
might call an operational interpretation of quantum states and processes.  We call
an algebra a pairing on a vector space, taking two vectors and producing a third
(you might instead call it a monoid if there is a unit, and then a group if the
set of considered vectors is closed under the product).  Let's now examine how every
tripartite quantum state forms an algebra~\cite{course}.

Consider a tripartite quantum state (subsystems labeled 1,2 and 3), and then ask the
question: ``how would the state of the third system change after measurement of 
systems one and two?''   Enter Algebras: as stated, an algebra on a vector space, or
on a Hilbert space is formed by a \textit{product} taking two elements from the
vector space to produce a third element in the vector space.  Algebra on states can
then be studied by considering duality of the state, that is considering the
adjunction between the maps of type
\begin{equation}
\eye\rightarrow \2 H\otimes \2H \otimes \2H~~~~~~~~\text{and}~~~~~~~~\overline{\2
H}\otimes \overline{\2H} \rightarrow \2H
\end{equation}
This duality is made evident by using the $\dagger$-compact structure of the category
(e.g.\ the cups and caps).  It is given vivid physical meaning by considering the
effect measuring (that is two events) two components of a state has on the third
component.

\begin{remark}[Overbar notation on Spaces] Given a Hilbert space $\2 H$, we can
consider the Hilbert space $\overline{\2H}$ which can be simply thought of as the
Hilbert space $\2 H$ will all basis vectors complex conjugates (overbar).  That is,
$\overline{\2H}$ is a vector space whose elements are in one-to-one
correspondence with the elements of $\2 H$:
\begin{equation}
    \overline{\2H} = \{\overline{v} \mid v \in \2 H\},
\end{equation}
with the following rules for addition and scalar multiplication:
\begin{equation}
    \overline v + \overline{w} =
\overline{\,v+w\,}\quad\text{and}\quad\alpha\,\overline v = \overline{\,\overline
\alpha \,v\,}.
\end{equation}
\end{remark}

\begin{remark}[Definition of Algebra]
We consider an algebra as a vector space $\2A$ endowed with a product, taking a pair
of elements (e.g.\ from $\2A\otimes \2A$) and producing an element in $\2A$.  So the
product is a map $\2A\otimes \2A\rightarrow \2A$, which may not be associative or
have a unit (that is, a multiplicative identity --- see Example \ref{ex:weakunits}
for an
example of an algebra on a quantum state without a unit).
\end{remark}

\begin{observation}[Every tripartite Quantum State Forms an Algebra]
Let $\ket{\psi}\in \2 H\otimes \2H \otimes \2H$ be a quantum state and let
$M_i$, $M_j$ be complete sets of measurement operators.  Then $(\ket{\psi}, M_i,M_j)$
forms
an algebra.
\begin{center}
\includegraphics[width=.750\textwidth]{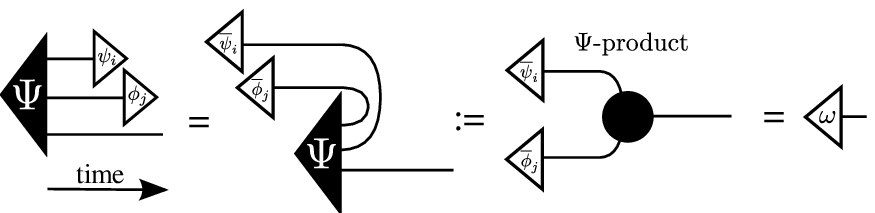}
\end{center}
\end{observation}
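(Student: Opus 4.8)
The plan is to produce the required binary product directly from the tripartite state by invoking the compact structure (cups and caps) of Section~\ref{sec:bends}, so that the statement becomes an instance of map--state duality. First I would regard the state as a morphism $\ket{\psi}:\eye\rightarrow\2H\otimes\2H\otimes\2H$ out of the ground field. Bending the first two output wires downward with the caps $\sum_i\bra{ii}$ turns those two outputs into inputs and yields a morphism $m:\overline{\2H}\otimes\overline{\2H}\rightarrow\2H$, exactly the second arrow in the duality displayed at the start of the appendix. Since in Hilbert space objects are self-dual (a basis has been fixed and $\2H^*=\2H$), this is a map $\2H\otimes\2H\rightarrow\2H$, which is precisely the datum of a product in the sense of the definition of an algebra given above.

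Next I would exhibit this product explicitly and tie it to the measurement operators. On inputs the bent map sends $\ket{a}\otimes\ket{b}$ to $(\bra{\overline a}\otimes\bra{\overline b}\otimes\eye)\ket{\psi}\in\2H$, where the effects $\bra{\overline a},\bra{\overline b}$ are supplied by the complete sets $M_i,M_j$. Thus the product of two vectors is the (unnormalised) residual state of subsystem three conditioned on the corresponding outcomes of measuring subsystems one and two; completeness of $M_i,M_j$ guarantees that $m$ is pinned down on a spanning set and extends to all of $\2H\otimes\2H$ by linearity. This is the operational reading advertised earlier: measuring two components of the state determines a vector for the third.

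It then remains only to verify the minimal axioms the definition demands, namely that $m$ is a well-defined bilinear map $\2A\otimes\2A\rightarrow\2A$ with $\2A=\2H$. Bilinearity is immediate from the linearity of wire-bending together with the (anti)linearity of the inner product in each slot; after the identification $\overline{\2H}\cong\2H$ the conjugations are absorbed and $m$ is honestly bilinear. Nothing further is needed, since the definition explicitly does not require associativity or a unit (indeed the \NAND-example shows a unitless instance).

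The step I expect to require the most care is the bookkeeping of the conjugate spaces $\overline{\2H}$: bending a wire introduces a complex conjugate, as stressed in Figures~\ref{fig:cupsetc} and~\ref{fig:adjoints}, so one must be consistent about whether the product is stated on $\2H$ or on $\overline{\2H}$ and about where the overbars land when the measurement effects are inserted. Everything else is a direct unpacking of the compact-structure definitions, and the diagram accompanying the statement already encodes the construction graphically.
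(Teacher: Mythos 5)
Your proposal is correct and takes essentially the same route as the paper: the paper also obtains the product from the adjunction between $\eye\rightarrow\2H\otimes\2H\otimes\2H$ and $\overline{\2H}\otimes\overline{\2H}\rightarrow\2H$ via the $\dagger$-compact structure (cups and caps), reads the product operationally as the residual state $\ket{\omega}$ of the third subsystem after measuring the first two with the complete sets $M_1,M_2$, and handles the conjugation bookkeeping exactly as you flag, via $\bra{\overline{Q}}=\ket{Q}^\top$ and the choice of a real-valued basis. Your observation that neither associativity nor a unit need be verified likewise matches the paper's definition of algebra (with the \NAND{} example as the unitless case).
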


The quantum state $\ket{\Psi} = \sum_{ijk}\psi^{ijk}\ket{ijk}$ is drawn as a
triangle, with
the identity operator on each
subsystem acting as time goes to the right on the page (represented as a wire). 
Projective measurements
with respect to $M_i$ and $M_j$ are made.  We define these complete
measurement operators as
\begin{gather}
M_1=\sum_{i=1}^{N}i\ket{\psi_i}\bra{\psi_i}
\end{gather}
\begin{gather}
M_2=\sum_{j=1}^{N}j\ket{\phi_j}\bra{\phi_j}
\end{gather}
such that we recover the identity operator on the $N$-level subsystem viz
\begin{equation}
\sum_{j=1}^{N}\ket{\phi_j}\bra{\phi_j}=\sum_{i=1}^{N}\ket{\psi_i}\bra{\psi_i}=\eye_N
\end{equation}
The measurements result in eigenvalues $i,j$ leaving the state of the unmeasured
system in
\begin{equation}
\ket{\omega} = \sum_{xyz}\psi^{xyz} \braket{\overline{\psi}^x}{x}
\braket{\overline{\phi}^y}{y} \ket{z}
\end{equation}
where $\bra{\overline{Q}} \bydef \ket{Q}^\top$ that is, the transpose is factored
into: (i) taking the dagger (diagrammatically this mirrors states across the page)
and (ii) taking the complex conjugate.  Hence,
\begin{equation}
\ket{\overline{Q}}^\dagger = \ket{Q}^\top = \bra{\overline{Q}} =
\overline{\ket{Q}^\dagger}
\end{equation}
and if we pick a real valued basis for $\ket{x},\ket{y},\ket{z}=\ket{0},\ket{1}$ we
recover
\begin{equation}
\ket{\omega} = \sum_{xyz}\psi^{xyz}
\braket{x}{\psi_x}\braket{y}{\phi_y}\ket{z}
\end{equation}

As stated, this physical
interpretation is not our main interest. Even in its absence, we're able to write
down and represent a quantum
state purely in terms of a connected network, where each component is fully defined
in terms of algebraic laws.

\section{\XOR-algebra}\label{sec:Boolean}
Here we review the concept of an algebraic normal form (\ANF) for Boolean
polynomials, commonly known as \PPRM s.  See the reference
book~\cite{Davio78} and the historical references~\cite{Cohn62,
xor70} for further details. 

\begin{definition}
The \XOR{}-algebra forms a commutative ring with presentation
$M=\{\7B,\wedge,\oplus\}$ where the following product is called \XOR
\begin{equation}
\text{---}\oplus\text{---}:\7B\times \7B\mapsto \7B:: (a,b)\rightarrow
a+b-ab~\text{mod}~2
\end{equation}
and conjunction is given as
\begin{equation}
\text{---}\wedge\text{---}:\7B\times \7B\mapsto \7B:: (a,b)\rightarrow a\cdot b, 
\end{equation}
where $a\cdot b$ is regular multiplication over the reals.  One defines left negation
$\neg(\text{---})$ in terms of $\oplus$ as
$\neg(\text{---})\equiv$
\begin{equation}
  \text{1}\oplus(\text{---}):\7B \mapsto \7B:: a\rightarrow 1-a.
\end{equation}
In the \XOR-algebra, 1-5 hold.  (i) $a\oplus 0 = a$, (ii) $a\oplus 1 = \neg a$, (iii)
$a\oplus a = 0$, (iv) $a\oplus \neg a = 1$ and (v) $a\vee b = a\oplus b\oplus
(a\wedge b)$.  Hence, $0$ is the unit of \XOR{} and $1$ is the unit of \AND. The 5th
rule
reduces to $a\vee b = a\oplus b$ whenever $a\wedge b=0$, which is the case for
disjoint ($\text{mod}~2$) sums.  The 
truth table for $\AND$ follows
\begin{center}
\begin{tabular}{c|c|c}
$~x_1~$ & $~x_2~$ & $f(x_1,x_2)=x_1\wedge x_2$ \\ \hline
0 & 0 & 0 \\
0 & 1 & 0 \\
1 & 0 & 0 \\
1 & 1 & 1
\end{tabular}
\end{center}

\end{definition}

\begin{definition}\label{def:FPRM}
Any Boolean equation may be uniquely expanded to the fixed polarity
Reed-Muller form as:\setlength{\arraycolsep}{0.140em}
\begin{eqnarray}\label{eqn:rm_exp}
&&f(x_1,x_2,...,x_k) = c_0\oplus c_1 x_1^{\sigma_1}\oplus c_2 x_2^{\sigma_2}\oplus
\cdots \oplus c_n x_n^{\sigma_n}\oplus\nonumber\\
&&~~~~~~~~c_{n+1}x_1^{\sigma_1} x_n^{\sigma_{n}}\oplus \cdots \oplus
c_{2k-1}x_1^{\sigma_1} x_2^{\sigma_2},...,x_k^{\sigma_k},
\end{eqnarray}
where selection variable $\sigma_i\in \{0,1\}$, literal
$x_i^{\sigma_i}$ represents a variable or its negation and any $c$
term labeled $c_0$ through $c_j$ is a binary constant $0$ or $1$. In
Equation~\eqref{eqn:rm_exp} only fixed polarity variables appear such that
each is in either un-complemented or complemented form.
\end{definition}

Let us now consider derivation of the form from Definition~\ref{def:FPRM}.  Because
of the structure of
the algebra, without loss of generality, one avoids keeping track of
indices in the $N$ node case, by considering the case where $N\equiv 2^n=8$.

\begin{example}

The vector $\underline{c}=(c_0,c_1,c_2,c_3,c_4,c_5,c_6,c_7,)^\intercal$ represents
all possible outputs of any function $f(x_1,x_2,x_3)$ over the
algebra formed from linear extension of $\7Z_2\times \7Z_2\times \7Z_2$.  We wish to
construct a normal form in terms of the vector $\underline{c}$, where each $c_i\in
\{0,1\}$, and therefore $\underline{c}$ is a selection vector
that simply represents the output of the function $f:\7B\times\7B\times\7B\rightarrow
\7B::(x_1,x_2,x_3)\mapsto f(x_1,x_2,x_3)$. One may expand $f$ as:

\begin{eqnarray}\label{eqn:generic}
f(x_1,x_2,x_3) &=& (c_0\cdot \neg x_1\cdot \neg x_2\cdot \neg
x_3)\vee(c_1\cdot \neg x_1\cdot \neg x_2\cdot x_3)\vee(c_2\cdot
\neg x_1\cdot x_2\cdot \neg x_3)\nonumber\\
&&\vee(c_3\cdot \neg x_1\cdot x_2\cdot x_3)\vee(c_4\cdot x_1\cdot
\neg x_2\cdot \neg x_3)\vee(c_5\cdot x_1\cdot
\neg x_2\cdot x_3)\nonumber\\
&&\vee(c_6\cdot x_1\cdot x_2\cdot \neg x_3)\vee(c_7\cdot x_1\cdot
x_2\cdot x_3)
\end{eqnarray}

Since each disjunctive term is disjoint the logical \OR{} operation may
be replaced with the logical \XOR{} operation.  By making the substitution $ \neg
a=a\oplus 1$ for all variables and rearranging terms one arrives at
the following normal form.  (For instance, $\neg x_1\cdot
\neg x_2\cdot\neg x_3=(1\oplus x_1)\cdot(1\oplus x_2)\cdot(1\oplus
x_3)=(1\oplus x_1\oplus x_2\oplus x_2\cdot x_3)\cdot(1\oplus x_3)=
1\oplus x_1\oplus x_2\oplus x_3\oplus x_1\cdot x_3\oplus x_2\cdot
x_3\oplus x_1\cdot x_2\cdot x_3$.) 

\begin{eqnarray}\label{eqn:generic2}
f(x_1,x_2,x_3) &=&c_0\oplus(c_0\oplus c_4)\cdot x_1\oplus(c_0\oplus
c_2)\cdot x_2\oplus(c_0\oplus c_1)\cdot x_3\oplus(c_0\oplus
c_2\oplus c_4\oplus c_6)\cdot x_1\cdot
x_2\nonumber\\
&& \oplus(c_0\oplus c_1\oplus c_4 \oplus c_5)\cdot x_1\cdot
x_3\oplus(c_0 \oplus c_1 \oplus c_2 \oplus c_3)\cdot x_2\cdot x_3\nonumber\\
&&\oplus (c_0\oplus c_1\oplus c_2\oplus c_3\oplus c_4 \oplus c_5
\oplus c_6\oplus c_7)\cdot x_1\cdot x_2\cdot x_3
\end{eqnarray}

The set of linearly independent vectors, $\{x_1,x_2,x_3,x_1\cdot
x_2,x_1\cdot x_3,x_2\cdot x_3,x_1\cdot x_2\cdot x_3\}$ combined with
a set of scalars from Equation~\ref{eqn:generic2} spans the eight
dimensional space of the Hypercube representing the Algebra.
A similar form holds for arbitrary $N$.

\begin{eqnarray}\label{eqn:generic3}
f(x_1,x_2,x_3) &=& (a_1)\cdot x_1\oplus(a_2)\cdot
x_2\oplus(x_3)\cdot x_3\oplus(a_1\oplus a_2\oplus a_1\oplus
c_2)\cdot x_1\cdot
x_2\nonumber\\
&& \oplus(a_1\oplus a_3\oplus a_1\oplus c_3)\cdot x_1\cdot
x_3\oplus(a_2\oplus a_3\oplus a_2\oplus
c_3)\cdot x_2\cdot x_3\nonumber\\
&&\oplus (a_1\oplus a_2\oplus a_3\oplus a_1\oplus a_2\oplus
a_3)\cdot x_1\cdot x_2\cdot x_3
\end{eqnarray}
\end{example}



\begin{thebibliography}{10}

\bibitem{MacLane}
Saunders Mac~Lane.
\newblock Categories for the working mathematician 2nd ed.
\newblock {\em Graduate Texts in Mathematics, Springer}, 1998.

\bibitem{2011arXiv1108.0888D}
S.~J. {Denny}, J.~D. {Biamonte}, D.~{Jaksch}, and S.~R. {Clark}.
\newblock {Algebraically contractible topological tensor network states}.
\newblock {\em J. Phys. A: Math. Theor.}, 45:015309, 2012.
\newblock arXiv:1108.0888.

\bibitem{2011arXiv1107.0936A}
S.~{Al-Assam}, S.~R. {Clark}, C.~J. {Foot}, and D.~{Jaksch}.
\newblock {Capturing long range correlations in two-dimensional quantum lattice
  systems using correlator product states}.
\newblock {\em ArXiv e-prints}, July 2011.

\bibitem{DMRG}
Steven~R. White.
\newblock Density matrix formulation for quantum renormalization groups.
\newblock {\em Phys. Rev. Lett.}, 69(19):2863--2866, Nov 1992.

\bibitem{RevModPhys.77.259}
U.~Schollw\"ock.
\newblock The density-matrix renormalization group.
\newblock {\em Rev. Mod. Phys.}, 77(1):259--315, Apr 2005.

\bibitem{2009PhRvL.102e7202H}
M.~J. {Hartmann}, J.~{Prior}, S.~R. {Clark}, and M.~B. {Plenio}.
\newblock {Density Matrix Renormalization Group in the Heisenberg Picture}.
\newblock {\em Physical Review Letters}, 102(5):057202, February 2009.
\newblock arXiv:0808.0666.

\bibitem{PhysRevLett.75.3537}
Stellan \"Ostlund and Stefan Rommer.
\newblock Thermodynamic limit of density matrix renormalization.
\newblock {\em Phys. Rev. Lett.}, 75(19):3537--3540, Nov 1995.

\bibitem{vidal-2003-91}
G.~Vidal.
\newblock Efficient classical simulation of slightly entangled quantum
  computations.
\newblock {\em Phys. Rev. Lett.}, 91:147902, 2003.

\bibitem{vidal-2004-93}
G.~Vidal.
\newblock Efficient simulation of one-dimensional quantum many-body systems.
\newblock {\em Phys. Rev. Lett.}, 93:040502, 2004.

\bibitem{2009PhRvA..80d2333B}
T.~{Barthel}, C.~{Pineda}, and J.~{Eisert}.
\newblock {Contraction of fermionic operator circuits and the simulation of
  strongly correlated fermions}.
\newblock {\em \pra}, 80(4):042333, October 2009.
\newblock arXiv:0907.3689.

\bibitem{2010PhRvB..82k5126P}
R.~N.~C. {Pfeifer}, P.~{Corboz}, O.~{Buerschaper}, M.~{Aguado}, M.~{Troyer},
  and G.~{Vidal}.
\newblock {Simulation of anyons with tensor network algorithms}.
\newblock {\em Phys. Rev. B}, 82(11):115126, September 2010.
\newblock arXiv:1006.3532.

\bibitem{2010PhRvB..82l5118K}
R.~{K{\"o}nig} and E.~{Bilgin}.
\newblock {Anyonic entanglement renormalization}.
\newblock {\em Phys. Rev. B}, 82(12):125118, September 2010.
\newblock arXiv:1006.2478.

\bibitem{course}
Jacob Biamonte.
\newblock {Lectures on Tensor Network States, QIC 890/891 Selected Advanced
  Topics in Quantum Information}.
\newblock {\em IQC, The University of Waterloo, Waterloo Ontario, Canada
  (youtube lecture series)}, 2011.
\newblock \href{http://qubit.org/courses/46-iqc2011.html}{\tt
  http://www.qubit.org/iqc2011}.

\bibitem{2005PhRvA..72d3618D}
A.~J. {Daley}, S.~R. {Clark}, D.~{Jaksch}, and P.~{Zoller}.
\newblock {Numerical analysis of coherent many-body currents in a single atom
  transistor}.
\newblock {\em \pra}, 72(4):043618, October 2005.

\bibitem{2010arXiv1010.2351S}
R.~{Steinigeweg}, S.~{Langer}, F.~{Heidrich-Meisner}, I.~P. {McCulloch}, and
  W.~{Brenig}.
\newblock {Coherent spin-current oscillations in transverse magnetic fields}.
\newblock {\em ArXiv e-prints}, October 2010.

\bibitem{2010NJPh...12b5005C}
S.~R. {Clark}, J.~{Prior}, M.~J. {Hartmann}, D.~{Jaksch}, and M.~B. {Plenio}.
\newblock {Exact matrix product solutions in the Heisenberg picture of an open
  quantum spin chain}.
\newblock {\em New Journal of Physics}, 12(2):025005, February 2010.
\newblock arXiv:0907.5582.

\bibitem{PhysRevA.76.011605}
Martin Bruderer, Alexander Klein, Stephen~R. Clark, and Dieter Jaksch.
\newblock Polaron physics in optical lattices.
\newblock {\em Phys. Rev. A}, 76(1):011605, Jul 2007.

\bibitem{PhysRevA.82.043617}
M.~Bruderer, T.~H. Johnson, S.~R. Clark, D.~Jaksch, A.~Posazhennikova, and
  W.~Belzig.
\newblock Phonon resonances in atomic currents through bose-fermi mixtures in
  optical lattices.
\newblock {\em Phys. Rev. A}, 82(4):043617, Oct 2010.

\bibitem{1367-2630-12-2-025014}
J~Schachenmayer, G~Pupillo, and A~J Daley.
\newblock Time-dependent currents of one-dimensional bosons in an optical
  lattice.
\newblock {\em New Journal of Physics}, 12(2):025014, 2010.

\bibitem{tommy}
T.~H. Johnson, S.~R. Clark, and D.~Jaksch.
\newblock Dynamical simulations of classical stochastic systems using matrix
  product states.
\newblock {\em Phys. Rev. E}, 82(3):036702, Sep 2010.

\bibitem{MPSreview08}
F.~{Verstraete}, V.~{Murg}, and J.~I. {Cirac}.
\newblock {Matrix product states, projected entangled pair states, and
  variational renormalization group methods for quantum spin systems}.
\newblock {\em Advances in Physics}, 57:143--224, 2008.

\bibitem{2010PhRvA..81e2338K}
C.~V. {Kraus}, N.~{Schuch}, F.~{Verstraete}, and J.~I. {Cirac}.
\newblock {Fermionic projected entangled pair states}.
\newblock {\em Phys. Rev. A}, 81(5):052338, 2010.
\newblock arXiv:0904.4667.

\bibitem{2010PhRvB..81p5104C}
P.~{Corboz}, R.~{Or{\'u}s}, B.~{Bauer}, and G.~{Vidal}.
\newblock {Simulation of strongly correlated fermions in two spatial dimensions
  with fermionic projected entangled-pair states}.
\newblock {\em Phys. Rev. B}, 81(16):165104, 2010.
\newblock arXiv:0912.0646.

\bibitem{MERA}
G.~{Vidal}.
\newblock {Class of Quantum Many-Body States That Can Be Efficiently
  Simulated}.
\newblock {\em Physical Review Letters}, 101(11):110501, 2008.

\bibitem{2009arXiv0912.1651V}
G.~{Vidal}.
\newblock {Entanglement Renormalization: an introduction}.
\newblock {\em ArXiv e-prints}, 2009.
\newblock \href{http://arxiv.org/abs/0912.1651}{\tt 0912.1651}.

\bibitem{2009PhRvB..80p5129C}
P.~{Corboz} and G.~{Vidal}.
\newblock {Fermionic multiscale entanglement renormalization ansatz}.
\newblock {\em Phys. Rev. B}, 80(16):165129, October 2009.
\newblock arXiv:0907.3184.

\bibitem{2010PhRvA..81a0303C}
P.~{Corboz}, G.~{Evenbly}, F.~{Verstraete}, and G.~{Vidal}.
\newblock {Simulation of interacting fermions with entanglement
  renormalization}.
\newblock {\em \pra}, 81(1):010303, January 2010.
\newblock arXiv:0904.4151.

\bibitem{prehistory}
J.~C. Baez and A.~Lauda.
\newblock {A Prehistory of n-Categorical Physics}.
\newblock {\em ArXiv e-prints}, 2009.
\newblock \href{http://arxiv.org/abs/0908.2469}{\tt arXiv:0908.2469}.

\bibitem{coherence}
A.~Joyal and R.~Street.
\newblock The geometry of tensor calculus i.
\newblock {\em Advances in Mathematics}, 88(55), 1991.

\bibitem{Selinger09}
P.~{Selinger}.
\newblock {A survey of graphical languages for monoidal categories}.
\newblock {\em ArXiv e-prints}, August 2009.
\newblock arXiv:0908.3347.

\bibitem{Lafont92}
Yves Lafont.
\newblock Penrose diagrams and 2-dimensional rewriting.
\newblock In {\em Applications of Categories in Computer Science}, volume 177
  of {\em London Mathematical Society Lecture Note Series}, pages 191--201.
  Cambridge University Press, 1992.

\bibitem{Lafont95}
Yves Lafont.
\newblock Equational reasoning with 2-dimensional diagrams.
\newblock In {\em Term Rewriting}, volume 909 of {\em Lecture Notes in Computer
  Science}, pages 170--195. Springer-Verlag, 1995.

\bibitem{Penrose}
Roger Penrose.
\newblock Applications of negative dimensional tensors.
\newblock {\em Combinatorial Mathematics and its Applications, Academic Press},
  1971.

\bibitem{2009arXiv0903.0340B}
J.~C. Baez and M.~Stay.
\newblock {Physics, Topology, Logic and Computation Rosetta Stone}.
\newblock {\em ArXiv e-prints}, March 2009.
\newblock \href{http://arxiv.org/abs/0903.0340}{\tt arXiv:0903.0340}.

\bibitem{Boolean03}
Yves Lafont.
\newblock Towards an algebraic theory of boolean circuits.
\newblock {\em Journal of Pure and Applied Algebra}, 184:2003, 2003.

\bibitem{3Dproofs}
Y.~Guiraud.
\newblock {The three dimensions of proofs}.
\newblock {\em ArXiv Mathematics e-prints}, December 2006.
\newblock \href{http://arxiv.org/abs/math/0612089}{\tt arXiv:math/0612089}.

\bibitem{CatCircuits}
Carolyn Brown and Graham Hutton.
\newblock {Categories, Allegories, and Circuit Design}.
\newblock In {\em Proceedings of the 10th Annual {IEEE} Symposium on Logic in
  Computer Science}. {IEEE} Computer Society Press, Los Alamitos, California,
  July 1994.

\bibitem{2010NJPh...12g3025A}
M.~{Aulbach}, D.~{Markham}, and M.~{Murao}.
\newblock {The maximally entangled symmetric state in terms of the geometric
  measure}.
\newblock {\em New Journal of Physics}, 12(7):073025, July 2010.
\newblock arXiv:1003.5643.

\bibitem{TNSreview09}
J.~I. {Cirac} and F.~{Verstraete}.
\newblock {Renormalization and tensor product states in spin chains and
  lattices}.
\newblock {\em J. Phys. A Math. Gen.}, 42:4004, 2009.

\bibitem{2010arXiv1001.0767S}
J.~{Smith} and M.~{Mosca}.
\newblock {Algorithms for Quantum Computers}.
\newblock {\em ArXiv e-prints}, January 2010.
\newblock arXiv:1001.0767.

\bibitem{Weg87}
I.~Wegener.
\newblock The complexity of boolean functions.
\newblock {\em Wiley-Teubner}, 1987.
\newblock online
  at:~\href{http://eccc.hpi-web.de/eccc-local/ECCC-Books/wegener_book_readme.h%
tml}{\tt http://eccc.hpi-web.de/}.

\bibitem{BH02}
E.~Boros and P.L. Hammer.
\newblock Pseudo-boolean optimization.
\newblock {\em Discrete Applied Mathematics}, 123(1-3):155--225, 2002.

\bibitem{MVL1}
Grzegorz Malinowski.
\newblock {Many-valued logics}.
\newblock {\em Clarendon Press: Oxford University Press}, 1993.
\newblock SERIES :Oxford logic guides.

\bibitem{KSV02}
A.~Kitaev, A.~Shen, and M.~Vyalyi.
\newblock Classical and quantum computation.
\newblock {\em AMS, Graduate Studies in Mathematics}, 47, 2002.

\bibitem{NC00}
Michael Nielsen and Isaac Chuang.
\newblock Quantum computation and quantum information.
\newblock {\em Cambridge University Press}, 2000.

\bibitem{2008RvMP...80..517A}
L.~{Amico}, R.~{Fazio}, A.~{Osterloh}, and V.~{Vedral}.
\newblock {Entanglement in many-body systems}.
\newblock {\em Reviews of Modern Physics}, 80:517--576, April 2008.

\bibitem{Atemp06}
R.~B. {Griffiths}, S.~{Wu}, L.~{Yu}, and S.~M. {Cohen}.
\newblock {Atemporal diagrams for quantum circuits}.
\newblock {\em Phys. Rev. A}, 73(5):052309, May 2006.

\bibitem{GESP07}
D.~{Gross}, J.~{Eisert}, N.~{Schuch}, and D.~{Perez-Garcia}.
\newblock {Measurement-based quantum computation beyond the one-way model}.
\newblock {\em Phys. Rev. A}, 76(5):052315, November 2007.

\bibitem{2008PhRvL.101r0506L}
L.~{Lamata}, J.~{Le{\'o}n}, D.~{P{\'e}rez-Garc{\'{\i}}a}, D.~{Salgado}, and
  E.~{Solano}.
\newblock {Sequential Implementation of Global Quantum Operations}.
\newblock {\em Physical Review Letters}, 101(18):180506, October 2008.
\newblock arXiv:0711.3652.

\bibitem{2007PhRvA..75c2311S}
C.~{Sch{\"o}n}, K.~{Hammerer}, M.~M. {Wolf}, J.~I. {Cirac}, and E.~{Solano}.
\newblock {Sequential generation of matrix-product states in cavity QED}.
\newblock {\em \pra}, 75(3):032311, March 2007.
\newblock arXiv:quant-ph/0612101.

\bibitem{ostromm1995}
S.~Ostlund and S.~Rommer.
\newblock Thermodynamic limit of density matrix renormalization.
\newblock {\em Phys. Rev. Lett.}, 75:3537, 1995.

\bibitem{fannesnachtwern1992}
M.~Fannes, B.~Nachtergaele, and R.~F. Werner.
\newblock Finitely correlated states on quantum spin chains.
\newblock {\em Lett. Math. Phys.}, 25:249, 1992.

\bibitem{verstraete08}
F.~Verstraete, V.~Murg, and J.~I. Cirac.
\newblock Matrix product states, projected entangled pair states, and
  variational renormalization group methods for quantum spin systems.
\newblock {\em Advances in Physics}, 57(2):143--224, 2008.

\bibitem{vidal2007}
G.~Vidal.
\newblock Entanglement renormalization.
\newblock {\em Phys. Rev. Lett.}, 99:220405, 2007.

\bibitem{PhysRevA.74.022320}
Y.-Y. Shi, L.-M. Duan, and G.~Vidal.
\newblock Classical simulation of quantum many-body systems with a tree tensor
  network.
\newblock {\em Phys. Rev. A}, 74(2):022320, Aug 2006.

\bibitem{2009PhRvB..80w5127T}
L.~{Tagliacozzo}, G.~{Evenbly}, and G.~{Vidal}.
\newblock {Simulation of two-dimensional quantum systems using a tree tensor
  network that exploits the entropic area law}.
\newblock {\em Phys. Rev. B}, 80(23):235127, December 2009.

\bibitem{PhysRevLett.100.040501}
Norbert Schuch, Michael~M. Wolf, Frank Verstraete, and J.~Ignacio Cirac.
\newblock Simulation of quantum many-body systems with strings of operators and
  monte carlo tensor contractions.
\newblock {\em Phys. Rev. Lett.}, 100(4):040501, Jan 2008.

\bibitem{2009NJPh...11h3026M}
F.~{Mezzacapo}, N.~{Schuch}, M.~{Boninsegni}, and J.~I. {Cirac}.
\newblock {Ground-state properties of quantum many-body systems:
  entangled-plaquette states and variational Monte Carlo}.
\newblock {\em New Journal of Physics}, 11(8):083026, August 2009.

\bibitem{2009PhRvB..80x5116C}
H.~J. {Changlani}, J.~M. {Kinder}, C.~J. {Umrigar}, and {G.~K.-L.} {Chan}.
\newblock {Approximating strongly correlated wave functions with correlator
  product states}.
\newblock {\em Phys. Rev. B}, 80(24):245116, December 2009.

\bibitem{Blog}
John Baez et~al.
\newblock Bimonoids from biproducts.
\newblock {\em The n-Category Cafe Blog}.
\newblock online
  at:~\href{http://golem.ph.utexas.edu/category/2010/09/bimonoids_from_biprodu%
cts.html} {\tt
  http://golem.ph.utexas.edu/category/2010/09/bimonoids\_from\_biproducts.html%
}.

\bibitem{catQM}
Samson Abramsky and Bob Coecke.
\newblock Categorical quantum mechanics.
\newblock {\em Chapter in the Handbook of Quantum Logic and Quantum Structures
  vol II, Elsevier}, 2008.

\bibitem{Selinger2007139}
Peter Selinger.
\newblock Dagger compact closed categories and completely positive maps:
  (extended abstract).
\newblock {\em Electronic Notes in Theoretical Computer Science}, 170:139 --
  163, 2007.
\newblock Proceedings of the 3rd International Workshop on Quantum Programming
  Languages (QPL 2005).

\bibitem{Davio78}
{\em Discrete and switching functions}.
\newblock McGraw-Hill Int. Book Co., 1978.

\bibitem{Cohn62}
M.~Cohn.
\newblock Inconsistent canonical forms of switching functions.
\newblock {\em IRE Transactions of Electronic Computers}, 1962.

\bibitem{xor70}
A.~Mukhopadhyay and G.~Schmitz.
\newblock Minimization of exclusive-or and logical-equivalence switching
  circuits.
\newblock {\em IEEE Trans. on Computers}, 1970.

\bibitem{BB11}
V.~{Bergholm} and J.~D. {Biamonte}.
\newblock {Categorical quantum circuits}.
\newblock {\em Journal of Physics A Mathematical General}, 44(24):245304, June
  2011.

\bibitem{A03}
D.~Aharonov.
\newblock A simple proof that toffoli and hadamard are quantum universal.
\newblock 2003.
\newblock \href{http://arxiv.org/abs/quant-ph/0301040}{\tt quant-ph/0301040}.

\bibitem{FA}
Joachim Kock.
\newblock Frobenius algebras and 2-d topological quantum field theories.
\newblock {\em Cambridge University Press}, 2003.

\bibitem{1996quant.ph..5011S}
P.~W. {Shor}.
\newblock {Fault-tolerant quantum computation}.
\newblock {\em ArXiv Quantum Physics e-prints}, 1996.
\newblock arXiv:quant-ph/9605011.

\bibitem{toffoli-state}
Eric Dennis.
\newblock Toward fault-tolerant quantum computation without concatenation.
\newblock {\em Phys. Rev. A}, 63(5):052314, Apr 2001.

\bibitem{Carboni-Walters}
A.~Carboni and R.F.C. Walters.
\newblock Cartesian bicategories i.
\newblock {\em Journal of Pure and Applied Algebra}, 49:11--32, 1987.

\bibitem{redgreen}
Bob Coecke and Ross Duncan.
\newblock Interacting quantum observables: Categorical algebra and
  diagrammatics.
\newblock {\em aXriv preprint 0906.4725}, 2011.

\bibitem{Kassel}
C.~Kassel.
\newblock Quantum groups.
\newblock {\em Springer Graduate Texts in Mathematics}, 1994.

\bibitem{catprotocols}
Samson Abramsky and Bob Coecke.
\newblock A categorical semantics of quantum protocols.
\newblock {\em Proceedings of the 19th IEEE conference on Logic in Computer
  Science (LiCS'04)}, 2004.

\bibitem{JDB08}
J.~D. {Biamonte}.
\newblock {Nonperturbative k -body to two-body commuting conversion
  Hamiltonians and embedding problem instances into Ising spins}.
\newblock {\em Phys. Rev. A}, 77(5):052331, May 2008.

\bibitem{MVL2}
Dhiraj~K. Pradhan.
\newblock {A Theory of Galois Switching Functions}.
\newblock {\em IEEE Tr. Computers}, 27(3):239--248, 1978.

\bibitem{MVL3}
S.L. Hurst.
\newblock {Multiple-Valued Logic - Its Status and its Future}.
\newblock {\em IEEE Tr. Computers}, 33(12):1160--1179, 1984.

\end{thebibliography}

\end{document}